\theoremstyle{plain}
\newtheorem{definition}{Definition}[section]
\theoremstyle{plain}
\newtheorem{thm}{Theorem}
\newtheorem{low}{Lower bound}
\begin{document}
\title{ENTCALC: Toolkit for calculating geometric entanglement in multipartite quantum systems}

\author{Piotr Masajada}
\author{Aby Philip}
\author{Alexander Streltsov}
\email{streltsov.physics@gmail.com}
\affiliation{Institute of Fundamental Technological Research, Polish Academy of Sciences, Pawi\'{n}skiego 5B, 02-106 Warsaw, Poland}

\begin{abstract}
    We present \textbf{entcalc}, a Python and MATLAB package for estimating the geometric entanglement of multipartite quantum states. The package operates as follows: given a multipartite quantum state as input, it outputs an estimate of its geometric entanglement. For pure states, it computes the geometric entanglement together with an estimation error. For mixed states, it provides both lower and upper bounds on the geometric entanglement, thereby identifying an interval in which the true value lies. We provide several methods to compute the lower bound, enabling users to balance accuracy against computational cost.
We apply entcalc to several representative examples, including for 
$3\otimes3$ PPT entangled states, mixtures of GHZ and W states, thermal states of selected three-qubit spin chains, and noisy GHZ and W states.  We observe signatures of quantum phase transitions by quantifying entanglement in spin chains. We also demonstrate that entanglement between non-neighbouring sites can be activated by tuning the external magnetic field. In all tested cases, the gap between the lower and upper bounds is found to be very small, indicating that entcalc provides highly accurate estimates of the geometric entanglement for these states.
\end{abstract}

\maketitle

\section{Introduction}
Quantum entanglement is a fundamental form of correlation between quantum particles, enabling communication protocols that outperform their classical counterparts~\cite{Horodeckientanglementreview}. As a uniquely quantum phenomenon, entanglement serves as a key resource for tasks such as quantum teleportation~\cite{Benettteleportofstate}, superdense coding~\cite{Benettsuperdensecoding}, and quantum key distribution~\cite{BennettQKD}. Many of these applications require highly entangled states. Therefore, determining which quantum states possess greater amounts of entanglement is of central importance.  

To quantify the amount of entanglement in a given quantum state, several entanglement measures have been introduced. The most prominent examples include entanglement entropy~\cite{Nielsencambridgeup2010}, which is an entanglement measure only for pure states, negativity~\cite{Vidalneg}, and geometric entanglement~\cite{Weipra68_2003}. Each of these measures has its own advantages and limitations. The minimal requirement for any valid entanglement measure is that it should be non-increasing under local operations and classical communication (LOCC)~\cite{BenettLOCC}. One can further require that the entanglement measure is zero if and only if a state is separable. 
Unfortunately, quantifying entanglement in a given state is a difficult task. Moreover, even a simpler problem, deciding whether a given state is separable or entangled, is an NP-hard problem~\cite{gurvitssep,Gharibianse}.
Since determining separability itself is NP-hard, evaluating an entanglement measure for a general mixed state, which involves optimization over separable states,  is also computationally demanding. 

To address these issues, several simplifications and alternative approaches have been developed. For instance, in the qubit–qubit and qubit–qutrit cases, the negativity of a state is equal to zero if and only if the state is separable~\cite{Horodeckipptcriterion,Perespptcriterion}. In higher-dimensional systems, negativity can still detect some, but not all entangled states. Another class of tools are the so-called entanglement witnesses~\cite{Horodeckipptcriterion,Terhaldetecting}. A hermitian operator 
$W$ is called an entanglement witness if $\operatorname{Tr}(\rho W)\geq0$ for all separable states and
$\operatorname{Tr}(\rho W)<0$ implies that the state 
$\rho$ is entangled. Every entangled state has a witness that detects it. Therefore, entanglement witnesses allow one to certify the presence of entanglement, although one witness cannot detect every entangled state. For that reason, one needs to use many witnesses to increase efficiency of entanglement detection. 

An $n$-partite pure quantum state is fully separable if it can be written as~\cite{Horodeckientanglementreview}: 
\begin{equation}
    \ket{\psi}=\ket{\phi_{1}}\otimes\ldots \otimes\ket{\phi_{n}},
    \label{eq::fulsep}
\end{equation}
where $\ket{\phi_{j}}$ are pure states of subsystem $j$. Otherwise, it is called multipartite entangled. A mixed quantum state is multipartite entangled if it cannot be represented as a convex combination of fully separable pure states. Multipartite entanglement is a resource for various information-processing tasks~\cite{Leemul,Eppingmul}.
In many scenarios, we are interested in quantifying the amount of multipartite entanglement in a given state. It is also interesting to consider the type of the multipartite entanglement. For example, when considering entangled states shared between four parties $ABCD$, the states $\ket{\Phi}_{AB}\otimes\ket{\Phi}_{CD}$, where $\ket{\Phi}$ is a Bell state shared between $A$ and $B$, and $\ket{\text{GHZ}}_{ABCD}$ are quite different types of multipartite entanglement in terms of which of the four parties share entanglement. For this purpose, the notion of genuine multipartite entanglement was introduced. A quantum state is said to be genuinely multipartite entangled if it cannot be written as a convex combination of states that are separable with respect to any bipartition of the system~\cite{Svetlichnyge,Horodeckientanglementreview}. 

As mentioned earlier, computing the entanglement of a bipartite state is already a difficult problem. Multipartite entanglement exhibits an even richer and more complex structure: every bipartite entangled state is multipartite entangled, but some multipartite entangled states are separable across some bipartition. Consequently, evaluating the entanglement of a multipartite quantum state is an even more demanding task.

Due to these challenges, studies typically follow one of three approaches.
In the first approach, one restricts the analysis to pure states. A recent example in this direction is a method for estimating the geometric entanglement of pure multipartite states~\cite{Fitterestimatingentanglementrandommultipartite}.

The second approach reduces the complexity by considering only pairs of subsystems of the global state. For instance, in multiqubit systems, one may analyze only the entanglement of two qubit subsystems~\cite{Wanggme,Chenqft,Osborneqft}. An advantage of this method is that an analytical formula exists for the geometric entanglement of two-qubit states, making computations significantly more tractable~\cite{Weipra68_2003}. However, this simplification comes at the cost of ignoring the full multipartite entanglement structure. For example, the GHZ state~\cite{Greenbergerghzsta} is highly entangled, yet all of its two-qubit reduced states are separable.

Another approach is to focus exclusively on genuine multipartite entanglement (GME). A state is classified as GME if it cannot be expressed as a convex combination of biseparable states. This mirrors the definition of multipartite entanglement, where a state is considered multipartite entangled when it cannot be written as a convex combination of fully separable states. 
Several witnesses of genuine entanglement have been introduced ~\cite{Gaomulwitn,Bournamemulwit}. 
Recent research on GME focus on deriving computable lower bounds
~\cite{Magme,Jungnitschlower,Ligme,Daigme,Wanggme}.

In this work, we present a package to compute the geometric entanglement \cite{Weipra68_2003} of pure and mixed bipartite and multipartite quantum states. The novelty of our package lies in the combination of lower bounds with an upper bound on the geometric measure of entanglement, derived in~\cite{Streltsovupperbound}.
Simultaneous computation of both lower and upper bounds makes our method significantly more powerful than previous approaches, as it provides an interval within which the true value of entanglement must lie. For lower bounds on the geometric measure of entanglement for the bipartite scenario, we use the lower bounds originally derived in~\cite{Ganardilocalpurity}. As part of our theoretical contribution, we extend this analysis to derive novel lower bounds on the geometric measure of entanglement for the multipartite scenario.
Additionally, we employ two other lower bounds based on \cite{Streltsov_2010,Dohertyksymmetric}, to improve the range of tools available for estimating entanglement. Choosing between different bounds allows users to adjust computation precision and speed of computations. To the best of our knowledge, our package constitutes the first package allowing for high-precision computation of geometric entanglement for multipartite mixed states.

We provide two implementations of our package: \textbf{entcalc}~\cite{entcalc} for MATLAB and \textbf{entcalcpy}~\cite{entcalcpy} for Python. 
In MATLAB, we make use of the external toolboxes: \texttt{cvx}~\cite{cvx,Grantcvx} and \texttt{qetlab}~\cite{Johnstonqetlabpackagematlab}. 
In Python, our implementation relies on \texttt{NumPy}~\cite{Harrisnumpy}, \texttt{QuTiP}~\cite{JOHANSSONqutippython}, \texttt{CVXPY}~\cite{Diamond2016cvxpy}, and \texttt{SciPy}~\cite{Virtanenscipypython}.

Beyond its methodological novelty, our approach has practical implications for a wide range of problems in quantum information theory. 
We show that it can be directly applied to evaluate entanglement in bound entangled states (section~\ref{se::ppt}), to evaluate the entanglement of low rank states (section~\ref{se::low_rank}), to quantify entanglement in spin-chain systems (section~\ref{se::spinxx}), and to investigate the distribution and dynamics of entanglement in noisy quantum channels, (section~\ref{se::noise}). In many cases, the gap between the lower and upper bounds is found to be very small, confirming that the bounds we use are tight and allow for high-precision estimation of the geometric entanglement.
We believe that the presented package will serve as a valuable numerical tool for theoretical studies of entanglement in multipartite systems.

This paper is organized as follows. In Section~\ref{s::ge}, we introduce the geometric measure of entanglement. Section~\ref{s::a} provides a detailed description of our algorithm. We begin by outlining the procedure used to compute the entanglement of pure states, after which we present the algorithm for mixed states. In particular, we give explicit formulations of the four lower bounds employed in our method.
In Section~\ref{s::ap}, we apply our package to several representative examples. Section~\ref{s::compa} is devoted to a performance comparison of the lower bounds introduced in Section~\ref{s::a}. Finally, we conclude the paper in Section~\ref{s::con}.

\section{Geometric entanglement}{\label{s::ge}}
In this section, we formally define the geometric entanglement of a given state and discuss some relevant properties. 
For a pure quantum state $\ket{\psi}$, the geometric entanglement is defined as~\cite{Bihamgrov,Weipra68_2003}.
\begin{equation}
    E_G(\ket{\psi}) = 1 - \max_{\ket{\phi} \in S} \left| \bra{\phi} \ket{\psi} \right|^2,
    \label{Egpure}
\end{equation}
where $S$ denotes the set of fully separable states. For mixed states, the geometric entanglement is defined via the convex-roof extension~\cite{Weipra68_2003}:
\begin{equation}
    E_G(\rho) = \min_{\{p_i, \ket{\psi_i}\}} \sum_i p_i \, E_G(\ket{\psi_i}),
    \label{eq::ge}
\end{equation}
where the minimization is taken over all possible pure-state decompositions of $\rho = \sum_i p_i \ket{\psi_i}\bra{\psi_i}$.

There exists a useful relation connecting the geometric measure of entanglement with the quantum fidelity~\cite{Streltsov_2010}:
\begin{equation}
    E_G(\rho) = 1 - \max_{\sigma \in S} F(\rho, \sigma),
    \label{eq::gewithF}
\end{equation}
where the fidelity between two density matrices $\rho$ and $\sigma$ is defined as 
\begin{equation}
    F(\rho, \sigma) = \left( \mathrm{Tr} \sqrt{ \sqrt{\rho} \, \sigma \, \sqrt{\rho} } \right)^{2}.
\end{equation}

In general, the computation of geometric entanglement is a highly demanding numerical task. 
For two-qubit states, however, there exists a useful analytical relation between the geometric entanglement and concurrence \cite{Weipra68_2003}: 
\begin{equation}
    E_G(\rho) = \tfrac{1}{2} \left( 1 - \sqrt{1 - C(\rho)^2} \right),
    \label{eq::2qubge}
\end{equation}
where $C(\rho)$ denotes the concurrence~\cite{Hillc,Wootersform}. 
The concurrence is defined as
\begin{equation}
    C(\rho) = \max \big( 0, \lambda_1 - \lambda_2 - \lambda_3 - \lambda_4 \big),
\end{equation}
where $\lambda_i$ are the eigenvalues, in decreasing order, of the matrix 
$R = \sqrt{ \sqrt{\rho} \, \tilde{\rho} \, \sqrt{\rho} }$ with 
$\tilde{\rho} = (\sigma_y \otimes \sigma_y) \rho^* (\sigma_y \otimes \sigma_y)$, where $\sigma_y$ is the Pauli $Y$ matrix. Equation~\eqref{eq::2qubge} allows for easy computation of geometric entanglement in the two-qubit case. 

Geometric measure of entanglement can also be analytically computed for bipartite pure states~\cite{Weipra68_2003}, isotropic states~\cite{Weipra68_2003}, symmetric states~\cite{Weipra68_2003,Hubenerge}, symmetric Dicke states~\cite{Weipra68_2003,Chenge}, three qubit pure states~\cite{Tamaryange}. 
However, in the general multipartite scenario, no closed analytical expression is known and one must rely on computable lower and upper bounds. 
These bounds will be described in detail in the following section devoted to our algorithm.
\section{Algorithm}{\label{s::a}}
In this section, we present our algorithm for computing the geometric measure of entanglement. We begin by recalling how to estimate geometric entanglement for pure states. In this case, the problem can be formulated as a semi-definite program (SDP). Furthermore, by analyzing numerical solution of the SDP, we can find the accuracy of the computation. We then move on to the mixed-state scenario, where no direct accuracy certificate is available. To overcome this limitation, we discuss how to compute both upper and lower bounds on the geometric entanglement.

\subsection{Pure states}

In this section, we present two methods for computing the geometric entanglement of a pure state. The first method is based on an SDP formulation and is the one implemented in our package. The second method is a gradient-descent based approach. 

\subsubsection{Approach based on SDP}

An SDP based approach for computing the geometric entanglement of pure states presented here was introduced in~\cite{Ganardilocalpurity}. 
For completeness, we briefly recall it here. 
Here, we present the formulation for a tripartite state; 
the generalization to an arbitrary number of subsystems is straightforward.

The geometric entanglement of a pure tripartite state $\ket{\psi}^{ABC}$ satisfies the following lower bound~\cite{Ganardilocalpurity}:
\begin{equation}
    E_G(\ket{\psi}^{ABC}) \geq 1 - \max_{\sigma^{AB} \in \mathrm{PPT}} \Tr(\rho^{AB} \sigma^{AB}),
    \label{eq::lopu}
\end{equation}
where $\rho^{AB}$ denotes the reduced state obtained by tracing out the subsystem $C$, and $\sigma^{AB}$ is a valid quantum state, i.e., a positive semi-definite operator with unit trace. 

In~\cite[Proposition 2]{Ganardilocalpurity}, it was shown that the corresponding upper bound takes the form 
\begin{equation}
    E_G(\ket{\psi}^{ABC}) \leq 1 - \max_{\sigma^{AB} \in \mathrm{PPT}} \Tr(\rho^{AB} \sigma^{AB}) + 4\sqrt{\varepsilon},
    \label{eq::lopuep}
\end{equation}
where $\varepsilon = 1 - \lambda_{\max}$, and $\lambda_{\max}$ denotes the largest eigenvalue of the optimal $\sigma^{AB}$ 
that maximizes the expression above. Thus, $4\sqrt{\varepsilon}$ is a method accuracy.
It can be generalized to the multipartite scenario, see Appendix \ref{se::pu}. 

\subsubsection{Approach based on gradient descent}
 Upper bound algorithm used in this method was derived in~\cite{Shimoniub,Mostub}:
 Let $\ket{\psi}$ be an $n$-partite state for which we aim to estimate an upper bound on the geometric measure of entanglement. To do so, we need to find the closest separable state $\ket{\phi}.$
We begin with a random separable decomposition 
$\ket{\phi}=\ket{\phi_1}\ket{\phi_2}\cdots\ket{\phi_n}$. 
Next, we define 
$
\ket{\tilde{\psi}}
    = \bra{\phi_2}\bra{\phi_3}\cdots\bra{\phi_n}\ket{\psi},
$
which is an unnormalized pure state on the Hilbert space of the first subsystem. 
We then replace $\ket{\phi_1}$ by the normalized vector proportional to $\ket{\tilde{\psi}}$, which increases the overlap $\left|\braket{\phi|\psi}\right|$ 
The same update is performed sequentially for $i=2,3,\ldots,n$. 
This completes a single iteration of the algorithm. 
The procedure is repeated until the product state converges to a fixed point, up to a prescribed tolerance~$\epsilon$.

The algorithm presented above computes an upper bound on the geometric entanglement. It operates through an iterative procedure in which, at each step, we identify a separable state that is closer to the input state. However, the algorithm may converge to a separable state that is not the closest one. Consequently, this method does not guarantee that the true geometric entanglement is found. What it does guarantee is that the value obtained constitutes a valid upper bound.

This method is also used as a subroutine in the computation of the upper bound on the geometric entanglement of a mixed state. More specifically, when evaluating the upper bound for a mixed state, we iteratively refine the pure states appearing in its separable decomposition, and this refinement is performed using the aforementioned method.

\subsection{Mixed states}
In the case of mixed states, we estimate the geometric entanglement by calculating both lower and upper bounds. 
\subsubsection{Upper bound}
The algorithm for computing the upper bound was introduced in~\cite{Streltsovupperbound}. 
In contrast to the methods used to calculate lower bounds, the algorithm for the upper bound is based on gradient descent.

For mixed states the algorithm proceeds as follows. 
Consider a density operator $\rho$ of dimension $d$ for which we seek an upper bound on the geometric measure of entanglement. 
The task is to find a separable state $\sigma$ that maximizes Eq.~\eqref{eq::gewithF}. 
By Carathéodory's theorem, $\sigma$ requires at most $d^{2}$ pure states in its decomposition \cite{Nielsencambridgeup2010}. 
We begin by generating a random separable decomposition of~$\sigma$. 
The algorithm then iteratively refines this decomposition in order to maximize the fidelity between $\rho$ and $\sigma$. 
Each iteration consists of the following steps:
\begin{enumerate}
    \item  
    At the beginning of each iteration we have a decomposition of $\rho$ given by $\{p_i, \ket{\psi_i}\}$ and a decomposition of $\sigma$ given by $\{q_i, \ket{\phi_i}\}.$ We start by adjusting the decomposition of~$\rho$. We do it as follows: first, we find matrix $A=\sum_{i,j}\sqrt{p_iq_j}\braket{\phi_j|\psi_i}\ket{i}\bra{j}.$ Then, we compute singular value decomposition $A=VDW^{\dagger}.$ We define $U=W^{\dagger}V^{\dagger}.$ Then, we compute 
    \begin{equation}
        \ket{\alpha_i}=U_{ij}\sqrt{p_j}\ket{\psi_j},
    \end{equation}
    where $U_{ij}$ are matrix elements of $U$ in computational basis. Next, we compute $p_i'=\braket{\alpha_i|\alpha_i}$ and $\ket{\psi_i'}$ is normalized $\alpha_i.$
    After this step, we obtain a new ensemble $\{p_i', \ket{\psi_i'}\}$.

    \item  
    In this step we update $\sigma.$ 
    For each~$i$ we update $\ket{\phi_i}$ to a new state $\ket{\phi_i'}$ that is closer to $\ket{\psi_i'}$.  
    This is done using the same procedure as in the pure-state case.  
    The update is performed for all~$i$.

    \item  
    The coefficients $q_i$ are replaced by 
   $
        q_i' = 
        \frac{p_i' \, \left|\braket{\phi_i'|\psi_i'}\right|^{2}}
        {\sum_{k} p_k' \, \left|\braket{\phi_k'|\psi_k'}\right|^{2}}.
    $

    \item 
    We compute the fidelity between the updated $\sigma$ and~$\rho$.  
    If the increase in fidelity falls below a prescribed threshold, the algorithm terminates and we output an upper bound of the geometric entanglement.
\end{enumerate}

For more details and directions on how the algorithm is implemented, we refer to \cite{Streltsovupperbound}.

\subsubsection{Lower bounds}
Computation of lower bound on geometric entanglement is more complicated. We derive four different versions of the lower bound, each with its own advantages and limitations, see Section \ref{s::compa}.
One convenient method follows directly from Eq.~\eqref{eq::gewithF}. 
Instead of performing the maximization over the set of separable states, 
we relax the separability condition and maximize over the larger set of states that are positive under partial transposition (PPT). 
This leads to the following relation:
\begin{equation}
1 - \max_{\sigma \in \mathrm{S}} F(\rho, \sigma) \geq 1 - \max_{\sigma \in \mathrm{PPT}} F(\rho, \sigma),
\label{eq::gePPT}
\end{equation}
where $\mathrm{S}$ denotes the set of separable states and $\mathrm{PPT}$ the set of states with a positive partial transpose. 
In the qubit-qubit and qubit-qutrit cases, the inequality above becomes an equality~\cite{Horodeckipptcriterion,Perespptcriterion}. 

The root fidelity between two quantum states can be formulated as an SDP problem, which enables efficient numerical computation of the lower bound~\cite{Skrzypczaksdpi}. 
For clarity of presentation, in what follows we provide the formulation of the lower bound for bipartite systems, but it can be easily generalized for multipartite case, see Appendix \ref{se::multi}.
\begin{low}{\label{bo::pptf}}
    We can bound the geometric entanglement from below by the following SDP.
    \begin{equation}
        E_{g}(\rho^{AB})\geq1-\left(\max_{\sigma^{AB},\, X} \; \, \mathrm{Tr} \frac{1}{2}\left(X+X^{\dagger}\right)\right)^{2}
\label{eq::bopptf}
    \end{equation}
    Subject to the following constraints:
    \[
\begin{aligned}
& \sigma^{AB} \succeq 0, \quad 
 \mathrm{Tr}(\sigma^{AB}) = 1, \quad 
 \sigma^{AB,T_B} \succeq 0, \quad 
\\
&\begin{pmatrix}
\rho^{AB} & X \\
X^\dagger & \sigma^{AB}
\end{pmatrix}
\succeq 0. \quad 
\end{aligned}
\]
\end{low}
The lower bound~\eqref{eq::bopptf} can be efficiently computed and provides a reliable estimate of geometric entanglement for a wide range of states. 
However, it becomes inaccurate for certain classes of entangled states — in particular, it yields zero for PPT entangled states\cite{HORODECKIpptent}. 

The accuracy of the lower bound can be systematically improved by incorporating the hierarchy of $k$-symmetric extensions~\cite{Dohertyksymmetric}. 
A bipartite state $\rho_{AB_1}$ is said to have a $k$-symmetric extension if there exists $\rho_{AB_1\ldots B_k}$ satisfying:
\begin{equation}   
\Tr_{B_1\ldots B_{i-1}B_{i+1}B_{k}}{\rho^{AB_1\ldots B_k}}=\Tr_{B_2\ldots B_{k}}{\rho^{AB_1\ldots B_k}}
\label{eq::ks}
\end{equation}
 for all $i=1,2,\ldots ,k$. 
These additional constraints define a hierarchy of increasingly tighter relaxations, indexed by $k$, 
that converge to the exact set of separable states in the limit $k \rightarrow \infty$ \cite{Dohertyksymmetric}. 
In practice, however, increasing $k$ rapidly enlarges the dimension of the optimization problem, 
making computations infeasible for large $k$. 
Therefore, this approach provides a trade-off between computational efficiency and tightness of the lower bound. 
The $k$-symmetric extension condition can also be formulated within the SDP framework, allowing for practical implementation.
 \begin{low}{\label{bo::pptk}}
    We can bound the geometric entanglement from below by the following SDP~\cite{Philipfide}. 
    \begin{equation}
        E_{g}(\rho^{AB_1})\geq1-\left(\max_{\sigma^{AB_1}, X} \;  \, \mathrm{Tr} \frac{1}{2}\left(X+X^{\dagger}\right)\right)^{2}
\label{eq::bopptk}
    \end{equation}
    Subject to the following constraints:
    \[
\begin{aligned}
& \sigma^{AB_1B_2\ldots B_k} \succeq 0, \quad  
 \mathrm{Tr}(\sigma^{AB_1}) = 1, \quad \\
 &\sigma^{AB_1\ldots B_k,T_i} \succeq 0, \quad where \quad i=A,B_1,\ldots ,B_k \\
&\begin{pmatrix}
\rho^{AB} & X \\
X^\dagger & \sigma^{AB_1}
\end{pmatrix}
\succeq 0, \quad \\
& \Tr_{B_1\ldots B_{i-1}B_{i+1}B_{k}}{\sigma^{AB_1\ldots B_k}}=\Tr_{B_2\ldots B_{k}}{\sigma^{AB_1\ldots B_k}}
\end{aligned} 
\]
\end{low}

In Appendix \ref{se::k} we show how we construct constraint for k-symmetric extension in a multipartite case.

An alternative approach to estimating the lower bound of geometric entanglement arises from the purity–entanglement complementarity relation derived in~\cite{Ganardilocalpurity}. Purity-entanglement complementary relation provides connections between thermodynamics and entanglement theory. Assume that Alice helps Bob to cool down his system to a ground state. Both of them can do local operations, Alice can additionally send classical information to Bob. The maximal fidelity with ground state achieveable in this regime is given by: $F^{AB}_{\rightarrow}(\rho^{AB})$. For a detailed discussion of this and how $F^{AB}_{\rightarrow}(\rho^{AB})$ is related with entanglement, see \cite{Ganardilocalpurity}. Purity-entanglement complementary relation for the multipartite scenario can be stated as follows.
\begin{thm}\label{thm:pec_mult}
For any multipartite pure state $\psi^{AB_{1}\cdots B_{M}}$, the following equation holds:
    \begin{equation}
            F^{AB}_{\rightarrow}(\rho^{AB_1\ldots B_{M-1}}) + E_{g}(\rho^{B_1\ldots B_M}) = 1.
            \label{eq::thm}
\end{equation}
\end{thm}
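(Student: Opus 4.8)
The plan is to deduce Theorem~\ref{thm:pec_mult} from three ingredients: a variational rewriting of the one-way assisted cooling fidelity $F^{AB}_{\rightarrow}$ when ``Bob'' consists of the separate parties $B_1,\ldots,B_{M-1}$; an elementary identity showing that the traced-out party $B_M$ supplies an optimal pure factor at no cost; and the Hughston--Jozsa--Wootters (HJW) correspondence between pure-state decompositions of a reduced state and measurements on a purifying system. Throughout, $E_g(\rho^{B_1\ldots B_M})$ is understood as the fully $M$-partite geometric entanglement with respect to the split $B_1:\ldots:B_M$, i.e.\ the convex roof~\eqref{eq::ge} of the pure-state quantity~\eqref{Egpure}.

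First I would record the variational form of the left-hand side. Extending the bipartite derivation of~\cite{Ganardilocalpurity} to a multipartite Bob, the optimal one-way protocol has Alice measure $A$ with a POVM $\{M_a\}$, broadcast the outcome to all $B_j$, and have each $B_j$ apply a \emph{local}, classically controlled unitary rotating its share of the conditional state $\rho_a^{B_1\ldots B_{M-1}}$ toward the product ground state; optimizing these unitaries (equivalently, the product target) gives
\[
F^{AB}_{\rightarrow}(\rho^{AB_1\ldots B_{M-1}}) \;=\; \max_{\{M_a\}} \; \sum_a p_a \; \max_{\ket{\phi_1},\ldots,\ket{\phi_{M-1}}} \; \bra{\phi_1\cdots\phi_{M-1}}\, \rho_a^{B_1\ldots B_{M-1}}\, \ket{\phi_1\cdots\phi_{M-1}} ,
\]
with $p_a$ the probability of outcome $a$ and $\rho_a^{B_1\ldots B_{M-1}}$ the associated conditional state. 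Restricting the maximization to rank-one $\{M_a\}$ is harmless, since $\tau\mapsto\max_{\ket{\phi_1},\ldots,\ket{\phi_{M-1}}}\bra{\phi_1\cdots\phi_{M-1}}\tau\ket{\phi_1\cdots\phi_{M-1}}$ is convex in $\tau$, so refining a POVM never lowers the averaged value.

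The core step is the ``free factor'' identity. Because $\ket{\psi}^{AB_1\ldots B_M}$ is pure and $\rho^{AB_1\ldots B_{M-1}}=\Tr_{B_M}\ket{\psi}\bra{\psi}$, a rank-one outcome $a$ leaves $B_1\ldots B_M$ in a pure state $\ket{\psi_a}^{B_1\ldots B_M}$ with $\rho_a^{B_1\ldots B_{M-1}}=\Tr_{B_M}\ket{\psi_a}\bra{\psi_a}$. For any product vectors $\ket{\phi_1},\ldots,\ket{\phi_{M-1}}$, setting $\ket{v}_{B_M}:=\bra{\phi_1}\cdots\bra{\phi_{M-1}}\ket{\psi_a}$ we get $\bra{\phi_1\cdots\phi_{M-1}}\rho_a^{B_1\ldots B_{M-1}}\ket{\phi_1\cdots\phi_{M-1}} = \braket{v|v} = \max_{\ket{\phi_M}}\lvert\braket{\phi_M|v}\rvert^{2} = \max_{\ket{\phi_M}}\lvert\braket{\phi_1\cdots\phi_M|\psi_a}\rvert^{2}$, so maximizing also over $\ket{\phi_1},\ldots,\ket{\phi_{M-1}}$ and invoking~\eqref{Egpure} yields $\max_{\ket{\phi_1},\ldots,\ket{\phi_{M-1}}}\bra{\phi_1\cdots\phi_{M-1}}\rho_a^{B_1\ldots B_{M-1}}\ket{\phi_1\cdots\phi_{M-1}} = 1 - E_g(\ket{\psi_a}^{B_1\ldots B_M})$. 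Then, by HJW, as $\{M_a\}$ ranges over rank-one POVMs on $A$ the ensembles $\{p_a,\ket{\psi_a}^{B_1\ldots B_M}\}$ range over precisely the pure-state decompositions of $\rho^{B_1\ldots B_M}$; substituting into the display above,
\[
F^{AB}_{\rightarrow}(\rho^{AB_1\ldots B_{M-1}}) \;=\; \max_{\{p_a,\psi_a\}} \sum_a p_a\bigl(1 - E_g(\ket{\psi_a}^{B_1\ldots B_M})\bigr) \;=\; 1 - \min_{\{p_a,\psi_a\}} \sum_a p_a\, E_g(\ket{\psi_a}^{B_1\ldots B_M}) \;=\; 1 - E_g(\rho^{B_1\ldots B_M}),
\]
the last equality being the convex-roof definition~\eqref{eq::ge}; rearranging gives~\eqref{eq::thm}. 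The same computation can alternatively be routed through~\eqref{eq::gewithF}, identifying the optimal ensemble together with its per-outcome optimal product states as the optimal separable state $\sigma$ in $\max_{\sigma\in S}F(\rho^{B_1\ldots B_M},\sigma)$.

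I expect the main obstacle to be fixing the operational model, not any calculation. The equality is exact only if $B_1,\ldots,B_{M-1}$ are allowed local classically controlled unitaries and do \emph{not} communicate with one another; if communication among them were permitted, the per-outcome fidelity would in general exceed $\max_{\ket{\phi_1},\ldots,\ket{\phi_{M-1}}}\bra{\phi_1\cdots\phi_{M-1}}\rho_a^{B_1\ldots B_{M-1}}\ket{\phi_1\cdots\phi_{M-1}}$, and~\eqref{eq::thm} would have to be stated with a coarser measure in place of the fully $M$-partite $E_g$. One should also check that, under this restriction, the case $M=2$ collapses to the bipartite complementarity relation of~\cite{Ganardilocalpurity}. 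Granting the model, what remains is only the convexity remark, the textbook HJW statement, and the one-line free-factor identity, so no new analytic input is required.
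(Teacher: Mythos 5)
Your proposal is correct and follows essentially the same route as the paper's proof in Appendix~\ref{se::pr}: rewrite $F^{AB}_{\rightarrow}$ as an optimization over rank-one POVMs on $A$ with product targets on the Bobs, use the fact that for a pure global state the optimal product overlap can be computed on $M-1$ parties (your ``free factor'' identity is exactly the paper's Eq.~\eqref{eqn:geometric_entanglement_1}, which you prove directly instead of citing), and invoke the HJW correspondence to identify POVM-induced ensembles with pure-state decompositions, hence with the convex roof. The only difference is presentational: you collapse the paper's two-inequality argument into one chain of equalities, and you supply the convexity justification for restricting to rank-one POVMs that the paper asserts without proof.
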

\begin{proof}
    See Appendix \ref{se::pr} for proof.
\end{proof}
 
We can upper bound $F^{AB}_{\rightarrow}(\rho^{AB_1\ldots B_{M-1}})$ by:
\begin{equation}
    F^{AB}_{\rightarrow}(\rho^{AB_1\ldots B_{M-1}})\leq \max_{X^{AB_1\ldots B_{M-1}}}\Tr(X^{AB_1\ldots B_{M-1}}\rho^{AB_1\ldots B_{M-1}}),
\end{equation}
where
\begin{equation}
        X^{AB_{1}\cdots B_{M-1}}\geq 0,\quad X^{A} = \openone^{A},\quad X^{T_{\tilde{B}}}\geq 0\,\, \forall \tilde{B},
\end{equation}
where $\tilde{B}$ is any bipartition of $B_1\ldots B_{M-1}.$
Using this relation, we can formulate the lower bound: 
 \begin{low}{\label{bo::purc}}
    We can bound the geometric entanglement from below by the following SDP. 
    \begin{equation}
        E_{g}(\rho^{BC})\geq1-\max_{X} \;  \, \mathrm{Tr} \left( X^{AB}\rho^{AB}\right)
        \label{eq::bopurc}
    \end{equation}
    Subject to the following constraints:
    \[
\begin{aligned}
& X^{AB} \succeq 0 \quad & 
& X^{A} = \openone^{A} \quad 
& X^{AB,T_B} \succeq 0 \quad 
& 
\end{aligned} 
\]
\end{low}
In the expressions above, $\rho^{ABC}$ denotes a purification of the mixed state $\rho^{BC}$, 
while $\rho^{AB}$ represents the reduced density matrix obtained by tracing out subsystem $C$. 

We can also state one more bound~\cite{Ganardilocalpurity}. It is based on the same relation as bound \eqref{eq::bopurc}. The difference is that we operate on the whole purified system. Consequently, we operate on higher dimensional matrices, which makes computations longer, but it also gives tighter bound. 
 \begin{low}{\label{bo::purs}}
    We can bound the geometric entanglement from below by the following SDP.
    \begin{equation}
        E_{g}(\rho^{BC})\geq1-\max_{X} \;  \, \mathrm{Tr} \left( X^{ABC}\rho^{ABC}\right)
        \label{eq::bopurs}
    \end{equation}
    Subject to the following constraints:
    \[
\begin{aligned}
& X^{ABC} \succeq 0 \quad 
 X^{A} = \openone^{A} \quad 
X^{ABC,T_i} \succeq 0 \quad 
\textit{where i=A,B,C} 
\end{aligned} 
\]
\end{low}

In the multipartite scenario, the constraints are largely analogous to the bipartite case. 
The only modification is that the auxiliary variable $X$ is required to be PPT with respect to relevant bipartitions. 
A formal proof of this statement is provided in the Appendix \ref{se::pr}. In Appendix \ref{se::multi} we show how the bounds look in multipartite scenario.

\section{Applications}\label{s::ap}
In this section, we present results obtained using our package to compute entanglement in several quantum states. In all cases, we calculated the lower bound using the bound \eqref{eq::bopurs}. Whenever we state that we computed geometric entanglement, we mean multipartite geometric entanglement, defined by the Eq. \eqref{eq::ge}, unless we specify otherwise.

\subsection{PPT entangled state}{\label{se::ppt}}
We begin by calculating the entanglement in the family of $3 \otimes 3$ Horodecki PPT entangled states~\cite{HORODECKIpptent}. 
These states are parameterized by $a\in [0,1]$ and are defined as follows:
\begin{equation}
    \rho_a = \frac{1}{8a + 1}
\begin{bmatrix}
a & 0 & 0 & 0 & a & 0 & 0 & 0 & a \\
0 & a & 0 & 0 & 0 & 0 & 0 & 0 & 0 \\
0 & 0 & a & 0 & 0 & 0 & 0 & 0 & 0 \\
0 & 0 & 0 & a & 0 & 0 & 0 & 0 & 0 \\
a & 0 & 0 & 0 & a & 0 & 0 & 0 & a \\
0 & 0 & 0 & 0 & 0 & a & 0 & 0 & 0 \\
0 & 0 & 0 & 0 & 0 & 0 & \tfrac{1+a}{2} & 0 & \tfrac{\sqrt{1 - a^2}}{2} \\
0 & 0 & 0 & 0 & 0 & 0 & 0 & a & 0 \\
a & 0 & 0 & 0 & a & 0 & \tfrac{\sqrt{1 - a^2}}{2} & 0 & \tfrac{1+a}{2}
\end{bmatrix}
\label{eq::ppt33}
\end{equation}
We plot the lower bound on the geometric entanglement of this state as a function of the parameter $a$ in Fig.~\ref{fig::pptent}. 
To create this plot, the parameter $a$ was sampled with a step size of $0.01$, starting from $a = 0.01$. 
The maximum difference between the lower and upper bounds is $5.15\times10^{-6}$. 
Therefore, our method performs well in this case. 

\begin{figure}
    \centering
    \includegraphics[width=0.9\linewidth]{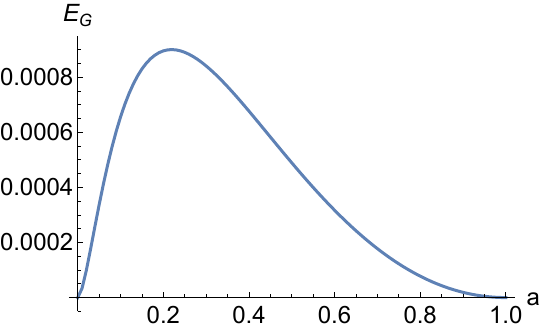}
    \caption{The lower bound on the geometric entanglement for the PPT entangled state defined in Eq.~\eqref{eq::ppt33}.
    Although the state remains PPT entangled for all $a\in(0,1),$ our method successfully quantifies its entanglement.
    The maximal difference between the lower and upper bounds is $5.15\times10^{-6}.$ }
    \label{fig::pptent}
\end{figure}

\subsection{Low-rank states}\label{se::low_rank}

Now consider the following convex combination of the GHZ and $W$ states:
\begin{equation}
    \rho(p) = p\,\ket{GHZ}\bra{GHZ} + (1 - p)\,\ket{W}\bra{W},
    \label{eq::ghzmix}
\end{equation}
where 

\begin{equation}
   \ket{GHZ}_N = \frac{1}{\sqrt{2}}\left(\ket{0}^{\otimes N} + \ket{1}^{\otimes N}\right),
   \label{eq::ghzn}
\end{equation}

\begin{equation}
    \ket{W}_N = \frac{1}{\sqrt{N}}\left(\ket{100\ldots 0} + \ket{010\ldots 0} + \ldots + \ket{0\ldots 01}\right).
    \label{eq::wn}
\end{equation}

\begin{figure}
    \centering
    \includegraphics[width=0.9\linewidth]{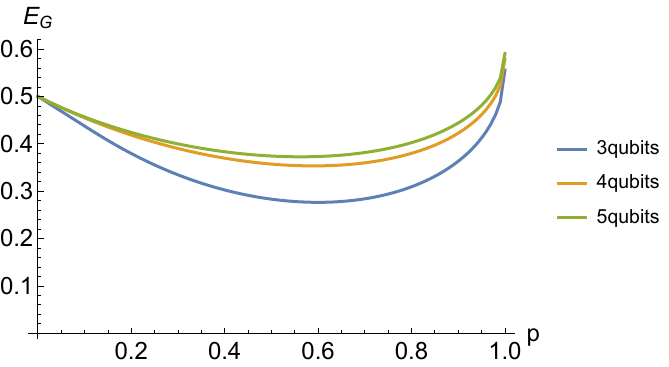}
    \caption{Plot of the geometric entanglement of the mixture from Eq. \eqref{eq::ghzmix} for a system consisting of 3, 4 and 5-parties, each having one qubit. The maximal difference between lower and upper bound was smaller than $5.05\times 10^{-6}$ for 3 parties, $1.5\times 10^{-5}$ for 4 parties and $3.5\times 10^{-3}$ for 5 parties.}
    \label{fig::ghw}
\end{figure}
The entanglement properties of this class of states were previously investigated in~\cite{Zhumix}, where the authors employed the robustness of entanglement~\cite{Vidalrobu} as the entanglement measure. They derived both lower and upper bounds on the robustness and showed that these bounds are nearly identical for three- and four-parties, each having one qubit. Here, we produce a similar analysis using the geometric measure of entanglement and extend the study to the five-qubit case. Figure~\ref{fig::ghw} presents the geometric entanglement of the state in~\eqref{eq::ghzmix} as a function of $p$. We plot it for three, four, and five parties, each having a single qubit.

\subsection{Thermal entanglement}{\label{se::spinxx}}
The study of thermal entanglement originated from investigations of entanglement in linear arrays of qubits~\cite{woottersthermal,O'Connorprathermal,Briegelthermal}. 
In these works, entanglement was analyzed both for ground states~\cite{O'Connorprathermal} and for systems at non-zero temperature~\cite{Arnesenprspinchains}. 
Consider a system described by a Hamiltonian $H$. Its state at thermal equilibrium at temperature $T$ is given by:
\begin{equation}
    \rho_T = \frac{e^{-\beta H}}{Z},
    \label{eq::thermal}
\end{equation}
where $\beta = 1/T$ is the inverse temperature (we set the Boltzmann constant $k_B = 1$) and $Z = \mathrm{Tr}(e^{-\beta H})$ is the partition function. 
The entanglement in the state defined by Eq.~\eqref{eq::thermal} is referred to as thermal entanglement~\cite{Arnesenprspinchains}. 

Up to now, most studies have focused on pairwise entanglement~\cite{Wangspins}, employed entanglement witnesses~\cite{Bruknerwitness}, or derived analytical results on multipartite entanglement for specific models~\cite{Stelmachovicthermal}. 
Some lower bounds on multipartite entanglement in spin systems have been obtained~\cite{Gühnespinchains}. 
For a comprehensive review of entanglement in spin chains, we refer the reader to ~\cite{Amicothermalreviewentang}. 

In this section, we demonstrate that our package can be effectively applied to extend these studies and provide an efficient method for the quantitative analysis of thermal multipartite entanglement.

Consider a one-dimensional lattice of spin-$\frac{1}{2}$ particles that can interact with one another. 
At first glance, this appears to be a simple model; however, it can effectively capture interaction mechanisms present in more complex systems~\cite{Breunigspinchains}. 
If we assume that each spin interacts only with its nearest neighbors, the Hamiltonian of the system takes the form:
\begin{equation}
\begin{aligned}
H = &\ \frac{1}{2}\left(\sum_j\left(J_x\sigma_j^{x}\sigma_{j+1}^{x}
     + J_y\sigma_j^{y}\sigma_{j+1}^{y}
     + J_z\sigma_j^{z}\sigma_{j+1}^{z}\right)\right) \\
    &\ + h\sum_j\sigma_j^{z},
\end{aligned}
\label{genhamod}
\end{equation}
 where $h$ denotes the strength of the external magnetic field, $J_i$ is the coupling constant, 
$\sigma_j^{i} = \openone^{\otimes (j-1)} \otimes \sigma^{i} \otimes \openone^{\otimes (n-j)}$ is a Pauli matrix acting on a subsystem $j$ and $i=0,1,2,3$ numerates the Pauli operator. 
Here, $n$ represents the total number of spins. 
In this work, we impose periodic boundary conditions, namely $\sigma_{n+1} = \sigma_{1}.$ 
The coupling constant $J_i$ characterizes the interaction strength between neighboring spins, 
and different spin models correspond to specific choices of $J_i.$  

In this paper, we analyze two values of the coupling constant, $J=\pm 1.$ We restrict ourselves to these cases because in the Hamiltonian \eqref{genhamod} all coupling parameters $J$ are assumed either to be equal or zero. Hence, a common factor 
$J$ can be extracted from the sum. As a result, all states under consideration take the form $e^{\, \beta J H'},$ where $H'$
denotes the Hamiltonian \eqref{genhamod} divided by the common factor 
$J.$ Consequently, the properties of the resulting state depend only on the product 
$\beta J.$ Without loss of generality, we fix 
$\abs{J}=1$ and vary the state by changing the sign of $\beta.$

 \subsubsection{Heisenberg XX model}

Before discussing the Heisenberg XX model, we would like to clarify the difference between pairwise and bipartite entanglement. Consider a three-qubit state $\rho^{ABC}.$ In this section, we usually compute multipartite entanglement of $\rho^{ABC},$ i.e. entanglement with respect to partition $2\otimes2\otimes2.$ If we trace out one of the subsystem, for example, $C$ and compute the entanglement of the remaining state $\rho^{AB}$, we obtain pairwise entanglement. In contrast, if we group two subsystems together and compute the entanglement of the resulting 
$4\otimes2$ bipartite state, we obtain bipartite entanglement. For instance, by joining subsystems 
$A$ and $B$, we obtain the bipartite state $\rho^{AB|C}$. 

In this section, we consider the case without an external magnetic field and focus on the XX model. 
The corresponding Hamiltonian, obtained from Eq.~\eqref{genhamod}, is given by
\begin{equation}
    H_{XX} = -\frac{J}{2} \sum_j \left( \sigma_{j}^{x}\sigma_{j+1}^{x} + \sigma_{j}^{y}\sigma_{j+1}^{y} \right).
    \label{eq::hxx}
\end{equation}
This type of Hamiltonian was previously examined in ~\cite{Wangspins}, where the authors studied the Heisenberg XX model for a three-qubit system. 
However, their analysis was restricted to pairwise entanglement, computed via Eq.~\eqref{eq::2qubge}, and did not address tripartite entanglement. 
They showed that for \( J < 0 \) no pairwise entanglement is present in the spin chain, while for \( J > 0 \) entanglement appears below the temperature \( T < 1.217 \). 

In this work, we revisit the three-qubit system from ~\cite{Wangspins} and compute the multipartite entanglement using our method. 
Figure~\ref{fig::hxx} shows the thermal entanglement of the Hamiltonian~\eqref{eq::hxx} for \( J = 1 \) and \( J = -1 \). 
As one can see from Figure~\ref{fig::hxx}, multipartite entanglement is present for both \( J > 0 \) and \( J < 0 \), even though pairwise entanglement vanishes for \( J < 0 \)~\cite{Wangspins}. The above result confirms that checking only pairwise entanglement is not sufficient.

Another interesting phenomenon appears in the parameter region just after entanglement first emerges for both $J\pm 1$. The lower bound, computed by our package, on the geometric entanglement becomes non-zero at approximately $\beta\approx0.68$ for $J=1$ and $\beta\approx0.65$ for $J=-1$. 
However, a straightforward analytical computation shows that every thermal state corresponding to the Hamiltonian~\eqref{eq::hxx} with coupling 
$J=-1$ remains PPT across all bipartitions for 
\( \beta < 0.781 \). (To find out that the state is PPT in a given bipartition, we computed negativity with respect to this bipartition.)
Similarly, all thermal states with Hamiltonian~\eqref{eq::hxx} and coupling 
$J=+1$ are PPT across all bipartitions for \( \beta  < 0.706 \). 
Consequently, every thermal state of the Hamiltonian \eqref{eq::hxx} with $J=+1$ and $\beta\in[0.65,0.706]$ is both entangled and PPT with respect to all bipartitions. Such states are bound entangled. Similarly, every thermal state with Hamiltonian \eqref{eq::hxx} and $\beta\in[0.65,0.781]$ exhibit the same behavior and are also bound entangled. In Fig.~\ref{fig::pptspi}, we plot the geometric entanglement of a XX model as function of $\beta$ but we include only those values of $\beta$ which gives PPT entangled state.
The existence of bound entanglement in spin chains was proven in ~\cite{Tothboundentanglementinspinchains}. 
In our work, we not only confirm the presence of bound entanglement in spin chains but also provide a quantitative characterization of it.
 
 \begin{figure}
     \centering
     \includegraphics[width=0.9\linewidth]{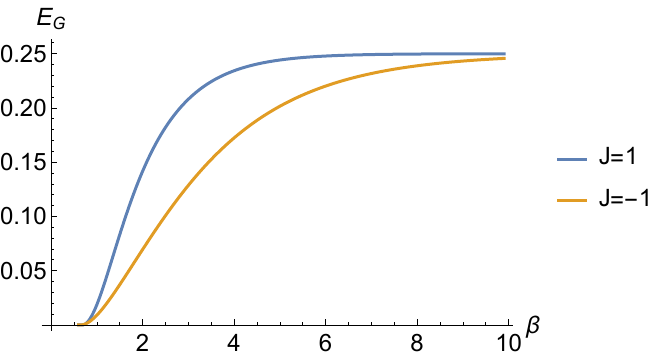}
     \caption{Thermal entanglement of three qubit Heisenberg spin chain with Hamiltonian \eqref{eq::hxx}, with $J=1$ in blue and $J=-1$ in orange. The difference between the calculated lower and upper bound difference is at most $2.42\times10^{-6}.$ State with $J=-1$ is multipartite entangled even though it is pairwise separable. }
     \label{fig::hxx}
 \end{figure}
\begin{figure}
     \centering
     \includegraphics[width=0.9\linewidth]{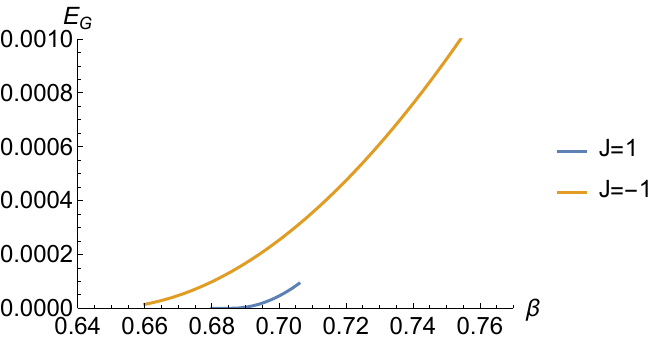}
     \caption{The geometric entanglement of a XX model. The plot has the same setting as Fig. \ref{fig::hxx}, but we plot only those values of $\beta$ which gives PPT entangled state. For $J=-1$ states with higher $\beta$ are bound entangled. The difference between the calculated lower and upper bound is atmost $4.5\times 10^{-6}$.}
     \label{fig::pptspi}
 \end{figure}

\subsubsection{XXX model with a magnetic field.}
Now, consider an XXX model with a magnetic field. The Hamiltonian \eqref{genhamod} is given by:
\begin{equation}
\begin{aligned}
H_{XXX} = &\ -\frac{J}{2}\sum_j\bigl(\sigma_{j}^{x}\sigma_{j+1}^{x}
    + \sigma_{j}^{y}\sigma_{j+1}^{y}
    + \sigma_{j}^{z}\sigma_{j+1}^{z}\bigr) \\
&\ + h\sum_{j}\sigma_{j}^{z}.
\end{aligned}
\label{eq::xxx}
\end{equation}
In this section we put $J=-1.$
Simply analytical computations show that the ground state corresponding to the thermal state (ground state is a state with $\beta\rightarrow\infty$) obtained from this Hamiltonian is highly entangled for a weak magnetic field (\( h = 1 \)), moderately entangled for an intermediate magnetic field (\( h = 1.5 \)), and separable for a strong magnetic field (\( h = 2 \)), see Appendix \ref{s::pha}. 
This result suggests the presence of a quantum phase transition~\cite{Sachdevqft} as the magnetic field strength changes. 
Indeed, by finding an analytical expression for the ground state, we observe a discontinuity at \( h = 1.5 \). 
Figure~\ref{fig::xxxma} shows the geometric entanglement of a thermal state \eqref{eq::xxx} as a function of the inverse temperature for these three values of the magnetic field.
To make this plot we sampled the inverse temperature from \( 0.3 \) to \( 6 \) with a step of \( 0.05 \), and from \( 6 \) to \( 10 \) with a step of \( 0.1 \). 

We also did more precise computations in the region where the entanglement emerges. Our lower bound becomes non-zero for \( \beta  \approx 0.365 \), independently of the magnetic field strength. 
Analytical computations show that the states become NPT in all bipartitions for \( \beta  > 0.462 \). 
Thus, there exists a range of \( \beta \) for which the thermal state is bound entangled also in this scenario.

We can also compute the entanglement as a function of the magnetic field. 
The phase transition becomes clearly visible at high values of the inverse temperature, as shown in Fig.~\ref{fig::mag}. 
For low temperatures ($\beta=10$), we observe a rapid change of the geometric entanglement around \( h = -1.5 \) and around \( h = 1.5 \). 
A similar change is also observed near \( h = 0 \). 
This behavior is consistent with analytical results, which indicate that in the limit \( \beta \rightarrow \infty \) there is a discontinuous change of the state at \( h = \pm1.5 \) and a single discontinuity at \( h = 0 \). 
At higher temperatures, this effect disappears. 

Entanglement in points close to quantum phase transitions has been studied extensively using concurrence~\cite{Chenqft,Osborneqft,Osterlohqft,Guqft,Glaserqft}. 
In ~\cite{Guqft,Glaserqft}, the authors showed that concurrence can exhibit discontinuities at quantum phase transitions. 
The works cited above focus on pairwise entanglement, whereas our method extends this analysis to the multipartite scenario.
\begin{figure}
    \centering
    \includegraphics[width=0.9\linewidth]{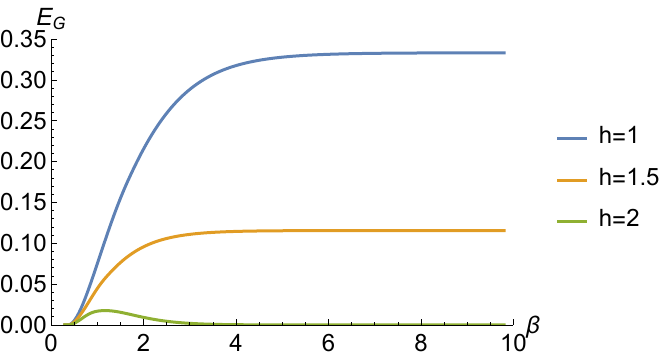}
    \caption{The plot shows the geometric entanglement of a spin chain described by the XXX model with a magnetic field, as a function of the inverse temperature. 
    Blue dots correspond to a weak magnetic field ($h = 1$), orange dots to a medium field ($h = 1.5$), and green dots to a strong field ($h = 2$). 
    Note the distinct values of geometric entanglement in the ground state ($\beta \rightarrow \infty$), which suggests the possible presence of a quantum phase transition in this system. 
    Each curve consists of 155 sample points, and the maximum estimation error is $3.81 \times 10^{-6}$. }
    \label{fig::xxxma}
\end{figure}
\begin{figure}
    \centering
    \includegraphics[width=0.9\linewidth]{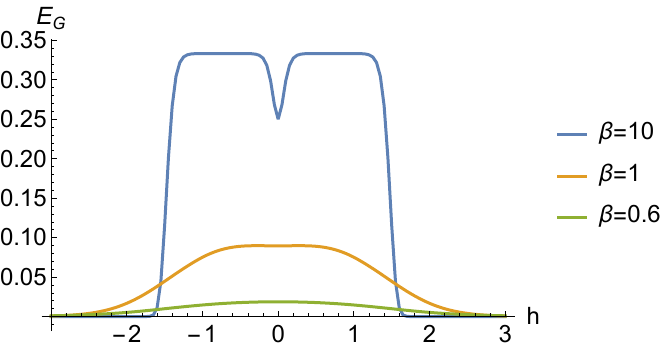}
    \caption{This figure depicts the geometric entanglement of a thermal state of XXX model as a function of the magnetic field. 
    Three configurations are shown: blue dots correspond to the low-temperature state ($\beta = 10$), orange dots to the medium-temperature state ($\beta = 1$), and green dots to the high-temperature case ($\beta = 0.6$). 
    The maximum difference between the lower and upper bounds on this plot is $4.45 \times 10^{-6}$. 
    For the low-temperature state, one can clearly observe a rapid change in entanglement near $h = -1.5$ and $h = 1.5$, which correspond to the points of quantum phase transition.}
    \label{fig::mag}
\end{figure}
\subsubsection{Hexagonal spin chain}
Assume that the spin chain has a length of 6 and periodic boundary conditions. 
The interactions are described by the XX model with an external magnetic field, and the Hamiltonian reads:
\begin{equation}
    H = -\frac{J}{2} \sum_j \left( \sigma_{j}^{x}\sigma_{j+1}^{x} + \sigma_{j}^{y}\sigma_{j+1}^{y} \right) 
        + h \sum_j \sigma_{j}^{z}.
    \label{eq::hxxwmf}
\end{equation}
We consider a three-qubit subsystem consisting of spins that do not interact directly with each other. 
There are two such possible subsystems, which are indicated by blue and green colors in Fig.~\ref{fig::hex}.
\begin{figure}
    \centering
    \includegraphics[width=0.5\linewidth]{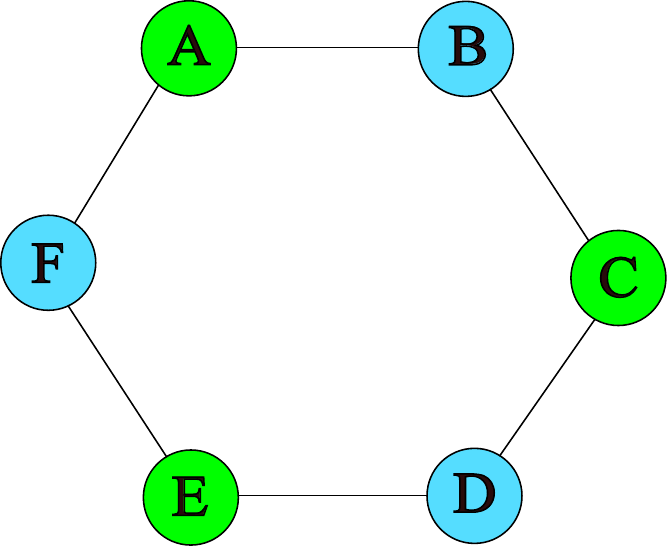}
    \caption{Six-qubit spin chain arranged in a circular structure. 
    Each qubit is represented by a circle and labeled with a letter. 
    Interactions between qubits are indicated by black lines. 
    Two distinct three-qubit subsystems are composed of qubits that do not interact directly with each other; 
    these subsystems are distinguished by different circle colors in the figure. }
    \label{fig::hex}
\end{figure}
Assume that the system is in a low-temperature environment with \( \beta = 5 \). 
We focus on the three-qubit subsystem \( B:D:F \), whose particles do not interact directly via this Hamiltonian. 
Interestingly, in this scenario, the application of a magnetic field can induce entanglement between non-interacting spins. 
As shown in Fig.~\ref{fig::hexag}, for both weak and strong magnetic fields, no thermal entanglement is observed. 
However, an intermediate magnetic field generates entanglement between non-neighboring sites of the chain. 
Figure~\ref{fig::hexag} was produced for the inverse temperature \( \beta = 5 \).

The three-partite entanglement emerges at \( h = 0.22 \). 
We also computed pairwise entanglement of the two-qubit subsystems \( B:D \), \( B:F \), and \( D:F \) using \eqref{eq::2qubge}. Our results show that none of these subsystems exhibit pairwise entanglement for \( h < 0.35 \), see Fig. \ref{fig::hexag}.
This observation further illustrates that in spin chains the entanglement structure is significantly richer than pairwise correlations alone. 
To obtain a complete understanding of the system, it is therefore necessary to employ entanglement measures on whole multipartite system.
\begin{figure}
    \centering
    \includegraphics[width=0.9\linewidth]{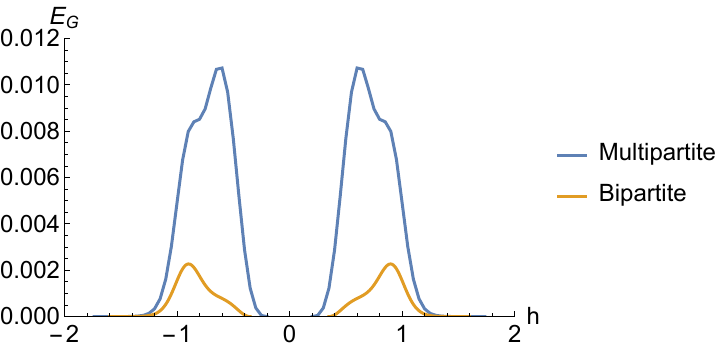}
    \caption{Geometric entanglement of the subsystem $B:D:F$ (blue plot) of the six-qubit thermal state governed by the Hamiltonian~\eqref{eq::hxxwmf}, plotted as a function of the magnetic field~$h$. 
    For $h=0$ there is no entanglement in the subsystem. 
    As $h$ increases, long-distance entanglement between non-interacting sites emerges, 
    but for strong magnetic fields ($h > 1.6$) the entanglement vanishes. 
    The maximum lower-upper bound difference in this plot is $6.6\times10^{-5}$. The orange plot is a plot of bipartite geometric entanglement between systems $D:F.$
    The plot was generated for the inverse temperature $\beta=5$. }
    \label{fig::hexag}
\end{figure}

\subsection{Noise in multipartite systems}{\label{se::noise}}
In this section, we study the impact of noise on highly entangled multipartite states, focusing on local noise. To analyze noise action we use quantum channel formalism and Kraus representation \cite{Nielsencambridgeup2010} of the quantum channel. 
We consider two types of noisy channels: amplitude damping and depolarizing channels. 
The Kraus operators for the amplitude damping channel are given by:
    \begin{equation}
    \begin{split}
        &E_{1}=
    \begin{bmatrix}
    1 && 0\\
    0 && \sqrt{1-q}
    \end{bmatrix},\\
    &E_{2}=\begin{bmatrix}
    0 && \sqrt{q}\\
    0 && 0
    \end{bmatrix}.\\  
    \end{split}
    \label{eq::gad}
\end{equation}
    Here, \( q \) is the damping parameter, satisfying \( 0 \leq q \leq 1 \). 
    
    Let us denote amplitude damping channel with parameter $q$, by $\Lambda_q^{AD}.$ 
We compare the effect of amplitude damping channel on the GHZ and W states for a three qubit system. To do so, we compute the multipartite geometric entanglement of the states:
\begin{equation}
    \Lambda_q^{AD}\otimes\Lambda_q^{AD}\otimes\Lambda_q^{AD}(\ketbra{GHZ})
    \label{eq::adg}
\end{equation}
and
\begin{equation}
    \Lambda_q^{AD}\otimes\Lambda_q^{AD}\otimes\Lambda_q^{AD}(\ketbra{W}).
    \label{eq::adw}
\end{equation}
Similar computations were performed in~\cite{Carvalhoamplitude,Alinoi}. 
In~\cite{Alinoi}, the authors employed a lower bound on genuine multipartite entanglement developed in~\cite{Jungnitschlower}. 
Their conclusion was that the W state is more robust against amplitude damping noise than the GHZ state. 
Our computations confirm this result. 
However, as shown in Fig.~\ref{fig::ghzm}, for low damping parameters the geometric entanglement of the GHZ state is initially larger than that of the W state. 
For higher noise parameters, the geometric entanglement of a W state becomes larger. 

We perform similar computations for depolarizing channel. Kraus operators of the depolarizing channel are:
\begin{align}
E_0 &= \sqrt{1-\frac{3p}{4}}\,I, \label{eq::E0}\\
E_1 &= \sqrt{\frac{p}{4}}\,X, \label{eq::E1}\\
E_2 &= \sqrt{\frac{p}{4}}\,Y, \label{eq::E2}\\
E_3 &= \sqrt{\frac{p}{4}}\,Z, \label{eq::E3}
\end{align}
where $X,Y,Z$ are the single qubit Pauli matrices. Let us denote depolarizing channel with strength $p$ by $\Lambda_{p}^{dep}.$ We compute the geometric entanglement of a state

\begin{equation}
    \Lambda_{p}^{dep}\otimes\Lambda_{p}^{dep}\otimes\Lambda_{p}^{dep}(\ketbra{GHZ})
    \label{eq::deg}
\end{equation}
and
\begin{equation}
    \Lambda_{p}^{dep}\otimes\Lambda_{p}^{dep}\otimes\Lambda_{p}^{dep}(\ketbra{W}).
    \label{eq::dew}
\end{equation}
Fig.~\ref{fig::dep} shows the entanglement of states~\eqref{eq::deg} and~\eqref{eq::dew} as a function of the depolarizing parameter. Curves for the geometric entanglement of both states look nearly identical, for $p>0.1$. Previous studies have performed similar comparisons between GHZ and W states. In~\cite{Alinoi} a lower bound on the genuine multipartite robustness of entanglement was used. This approach led to different conclusions: the robustness bound indicated that the GHZ state is more resilient to noise, while in our computations, both states performed similarly.
\begin{figure}
    \centering
    \includegraphics[width=0.9\linewidth]{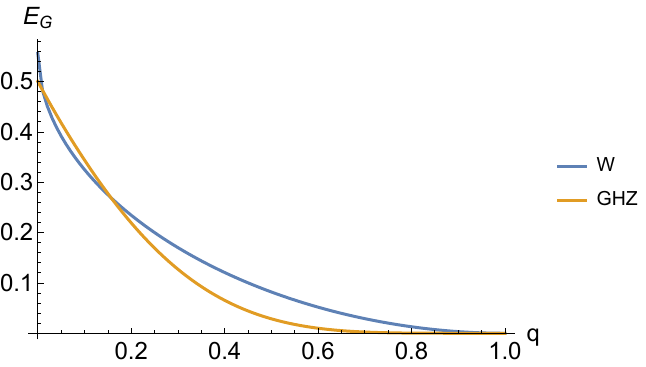}
    \caption{Geometric entanglement as a function of the damping parameter for the amplitude damping channel acting on GHZ and W states. 
    Blue dots represent the GHZ state, while orange dots correspond to the W state. 
    For small values of the damping parameter, the GHZ state exhibits higher entanglement; 
    however, as the noise strength increases, the W state becomes more robust and retains more entanglement. For W state difference between lower and upper bound was at most $4.5\times 10^{-8}.$ For GHZ state it was $4\times 10^{-3}.$ }
    \label{fig::ghzm}
\end{figure}
\begin{figure}
    \centering
    \includegraphics[width=0.9\linewidth]{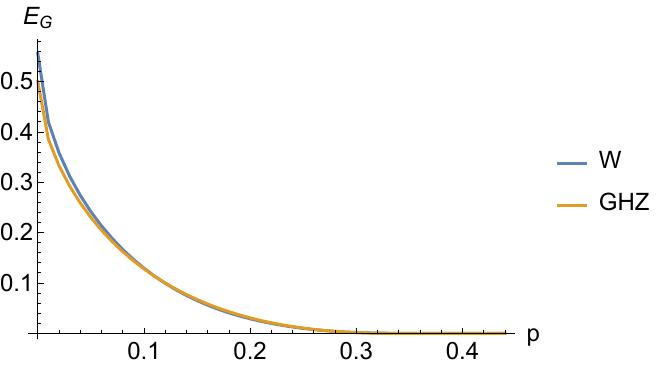}
    \caption{Plot of the geometric entanglement as a function of depolarizing parameter for the depolarizing channel. Blue dots represent GHZ state, while orange dots represent W state. In this scenario, entanglement is almost the same in both states. Highest lower-upper difference is $3\times 10^{-6}.$}
    \label{fig::dep}
\end{figure}
\section{Comparison of lower bounds}{\label{s::compa}}
To compare the performance of lower bounds on bipartite systems, we generated 100 random \(3 \times 3\) mixed states and computed the corresponding lower bounds for each. 
The mean computation times for each bound were as follows: 
for bound~\eqref{eq::bopptf}, \(1.008\ \mathrm{s}\); 
for bound~\eqref{eq::bopptk} with \( k = 2 \), \(3.03\ \mathrm{s}\) and with \( k = 3 \), \(228.2\ \mathrm{s}\); 
for lower bound~\eqref{eq::bopurc}, \(2.50\ \mathrm{s}\); 
and for bound~\eqref{eq::bopurs}, \(180\ \mathrm{s}\). Computations were done on a standard laptop (16GB RAM), using MATLAB version, all optional arguments of the functions had default values. 

The best results were obtained with bounds~\eqref{eq::bopurs} and \eqref{eq::bopptk} with \( k = 3 \), producing very similar outcomes. 
We performed computations with a precision of \(3 \times 10^{-8}\). 
In 60 cases, the difference between the bounds was smaller than the precision; in 98 cases it was below \(10^{-7}\); and in all remaining cases it was less than \(2 \times 10^{-6}\). 
In 96 instances, the solution given by bound~\eqref{eq::bopurs} yielded slightly better results. 
Although these two bounds give very similar results, bound~\eqref{eq::bopurs} is faster, making it preferable in practice, see Table \ref{tab::compa}. 

Sometimes, we are more interested in speed than in accuracy. In such cases, we can use the bounds~\eqref{eq::bopptk} with \( k = 2 \) and~\eqref{eq::bopurc}. They offer similar computation times and comparable accuracy, with the largest observed difference being \(8 \times 10^{-4}\). 
For 65 cases, bound~\eqref{eq::bopptk} with \( k = 2 \) was more accurate. 

Finally, bound~\eqref{eq::bopptf} provides the worst accuracy but is much faster than the other bounds, making it suitable for high-speed computations. 
However, it can only detect states with positive negativity and therefore cannot detect PPT-entangled states.

 In Appendix \ref{s::accura} we discuss the numerical precision of the package. We also provide an explanation for why, in some cases, the computed lower bound may slightly exceed the corresponding upper bound. 

\begin{table*}[t]
    \centering
    \begin{tabular}{c|c|c}
        Bound & Number of times given bound was most accurate & Average computation time in seconds\\
        \hline
         \eqref{eq::bopptk} & 0 & 1.008 \\
         \eqref{eq::bopptk} with $k=2$ & 0 & 3.03\\
         \eqref{eq::bopptk} with $k=3$ & 64 & 228.2\\
         \eqref{eq::bopurc} & 0 & 2.5\\
         \eqref{eq::bopurs} & 96 & 180
    \end{tabular}
    \caption{Comparison of the performance of lower bound. We sampled 100 random $3\times3$ mixed states. The total from the second column is greater than the number of total samples, because sometimes two bounds gave most accurate result up to computer precision. The best bound for the high accuracy precision is bound \eqref{eq::bopurs}. For a trade-off between speed and accuracy, we recommend bound \eqref{eq::bopptk} with $k=2,$ see comment in the text. The precision of computation was set to be $3\times10^{-8}.$}
    \label{tab::compa}
\end{table*}

\section{Conclusions}{\label{s::con}}

In this paper, we present \textbf{entcalc}~\cite{entcalc,entcalcpy}, a   new package for computing geometric entanglement. These package operates by evaluating lower and upper bounds of a geometric entanglement for a given multipartite state. We presented four different types of lower bounds and demonstrate that the gap between the upper and lower bounds can often be made very small. Consequently, this allows us to compute the geometric entanglement of a given state with high accuracy.

We demonstrated that our package can compute, with high accuracy, the amount of entanglement in the $3\otimes 3$ PPT entangled state. We then show that our package can be used for detecting and quantifying entanglement in bound entangled states that arise in certain spin-chain models. Such cases are particularly challenging because measures based on negativity fail to detect bound entanglement.
Furthermore, we showed that our package can accurately characterize thermal entanglement in states near a quantum phase transition. Using entcalc, we also demonstrated that multipartite entanglement can be generated between non-neighboring sites by tuning the magnetic field. Finally, we showed that entcalc can quantify entanglement in states obtained by applying noise to highly entangled states. The geometric entanglement was typically computed with precision $10^{-6}$ for a three qubit systems. 

We also compared the bounds used within this work and concluded that the most accurate results are obtained using the bound in \eqref{eq::bopurs}. For a balance between accuracy and computational efficiency, the bound in \eqref{eq::bopptk} with 
$k=2$ performs best. The bound in \eqref{eq::bopptf} is suitable for fast computations. Although it can sometimes produce high-precision estimates, it is generally less accurate.

We believe that our package can be a useful tool for studying entanglement and can help uncover new phenomena in spin chains, identify new bound entangled states, and support research on entanglement dynamics in noisy environments.

\section*{Acknowledgements}
We thank Pablo V. Parellada for a discussion. This work was supported by the National Science Centre Poland (Grant No. 2022/46/E/ST2/00115 and 2024/55/B/ST2/01590) and within the QuantERA II Programme (Grant No. 2021/03/Y/ST2/00178, acronym ExTRaQT) that has received funding from the European Union's Horizon 2020 research and innovation programme under Grant Agreement No. 101017733.
\bibliography{biblo}
\onecolumngrid
\appendix

\section{Pure states}{\label{se::pu}}
Entanglement in a pure $(M+1)$-partite state can be upper bounded by:
\begin{equation}
    E_G(\ket{\psi}^{A_1\ldots A_MB}) \geq 1 - \max_{\sigma^{A_1\ldots A_M} \in \mathrm{PPT}} \Tr(\rho^{A_1\ldots A_M} \sigma^{A_1\ldots A_M}).
    \label{eq::lopua}
\end{equation}
It often happens that $\sigma^{A_1\ldots A_M}$ that maximizes above expression is a pure state. We will derive a bound on accuracy of estimation of entanglement using \eqref{eq::lopua}.
Let $\psi_{A_{[1,M]}}$ be a multipartite pure state, where $M\in\mathbb{N}$ and $[n,M]=\{n,n+1,\ldots,M\}$ $\forall\, n\leq M$. Further $\sigma_{\operatorname{PPT}}$ is multipartite state which is PPT in all bipartitions such that $F(\psi,\sigma_{\operatorname{PPT}}) = 1-\varepsilon$ with some $\varepsilon > 0$. Using the same
arguments as in the proof of Eq. (5) in \cite{Ganardilocalpurity}, there
exist product states 
\begin{align}
     F(\psi_{A_{[1,M]}}, \phi_{A_{[1,i+1]}}\otimes \phi_{A_{[i+1,M]}}) &> 1-\varepsilon\\
     F(\psi_{A_{[1,M]}}, \phi_{A_{[2,i+1]}}\otimes \phi_{A_{1}A_{[i+2,M]}}) &> 1-\varepsilon
\end{align}
Then, using the triangle inequality of trace distance, Fuchs-van de Graaf inequality, we get
\begin{align}
    &\Vert \psi_{A_{[1,M]}} - \bigotimes_{i=1}^{M}\phi_{A_{i}}\Vert_{1} \leq \Vert \psi_{A_{[1,M]}} - \phi_{A_{1}}\otimes\phi_{A_{[2,M]}}\Vert_{1} + \Vert \phi_{A_{1}}\otimes\phi_{A_{[2,M]}} - \bigotimes_{i=1}^{M}\phi_{A_{i}}\Vert_{1}\\\nonumber
    &=\Vert \psi_{A_{[M]}} - \phi_{A_{1}}\otimes\phi_{A_{[2,M]}}\Vert_{1} + \Vert \phi_{A_{[2,M]}} - \bigotimes_{i=2}^{M}\phi_{A_{i}}\Vert_{1}\\\nonumber
    &\leq \Vert \psi_{A_{[M]}} - \phi_{A_{1}}\otimes\phi_{A_{[2,M]}}\Vert_{1} + \Vert \phi_{A_{[2,M]}} -\phi_{A_{2}}\otimes\phi_{A_{[3,M]}}\Vert_{1} +\Vert \phi_{A_{2}}\otimes\phi_{A_{[3,M]}}- \bigotimes_{i=2}^{M}\phi_{A_{i}}\Vert_{1}\\\nonumber
    &= \Vert \psi_{A_{[M]}} - \phi_{A_{1}}\otimes\phi_{A_{[2,M]}}\Vert_{1} + \Vert \phi_{A_{[2,M]}} -\phi_{A_{2}}\otimes\phi_{A_{[3,M]}}\Vert_{1} +\Vert \phi_{A_{[3,M]}}- \bigotimes_{i=3}^{M}\phi_{A_{i}}\Vert_{1}
\end{align}
And so on until,
\begin{align}
    &\Vert \psi_{A_{[1,M]}} - \bigotimes_{i=1}^{M}\phi_{A_{i}}\Vert_{1}\leq  \Vert \psi_{A_{[M]}} - \phi_{A_{1}}\otimes\phi_{A_{[2,M]}}\Vert_{1} + \sum_{i=1}^{M-2}\Vert \phi_{A_{[i+1,M]}} -\phi_{A_{i+1}}\otimes\phi_{A_{[i+2,M]}}\Vert_{1}\\\nonumber
    &\leq  \Vert \psi_{A_{[M]}} - \phi_{A_{1}}\otimes\phi_{A_{[2,M]}}\Vert_{1} + \sum_{i=1}^{M-2}\Vert \phi_{A_{[1,i+1]}}\otimes \phi_{A_{[i+1,M]}} -\phi_{A_{[2,i+1]}}\otimes\phi_{A_{1}A_{[i+2,M]}}\Vert_{1}\\\nonumber
    &\leq \Vert \psi_{A_{[M]}} - \phi_{A_{1}}\otimes\phi_{A_{[2,M]}}\Vert_{1}\\\nonumber
    &\qquad\qquad\qquad + \sum_{i=1}^{M-2}\Vert \phi_{A_{[1,i+1]}}\otimes \phi_{A_{[i+1,M]}}-\psi_{A_{[M]}}\Vert_{1}+\Vert\psi_{A_{[M]}} -\phi_{A_{[2,i+1]}}\otimes\phi_{A_{1}A_{[i+2,M]}}\Vert_{1}\\\nonumber
    &\leq (4M-6)\sqrt{\varepsilon}
\end{align}
This implies that there exists a separable state $\sigma_{\operatorname{SEP}}$ such that 
\begin{equation}
    \Vert\sigma_{\operatorname{SEP}}-\psi\Vert_{1} \leq (4M-6)\sqrt{\varepsilon}
\end{equation}
Using the arguments in \cite[Appendix B]{Ganardilocalpurity}, we get that
\begin{align}
    \vert\operatorname{Tr}(\rho\sigma_{\operatorname{PPT}}) - \operatorname{Tr}(\rho\sigma_{\operatorname{SEP}})\vert &\leq \Vert\sigma_{\operatorname{PPT}}-\sigma_{\operatorname{SEP}}\Vert_{1}\\\nonumber
    &\leq \Vert\sigma_{\operatorname{PPT}}-\psi\Vert_{1}+\Vert\sigma_{\operatorname{SEP}}-\psi\Vert_{1}\\\nonumber
    &\leq 4(M-1)\sqrt{\varepsilon}
\end{align}
Putting it all together, we obtain that for an (M)-partite pure state, the accuracy of the estimation of the geometric entanglement by \eqref{eq::lopua} is given by $4(M-2)\sqrt{\varepsilon}.$
\section{Proof of Theorem~\ref{thm:pec_mult}}{\label{se::pr}}
    \begin{definition}
        Multipartite Geometric Entanglement for a pure state $\psi$ is defined as follows:
        \begin{equation}\label{eqn:geometric_entanglement}
            E_{g,M}(\psi^{B_{1}\cdots B_{M}}) = 1 - \max_{\phi\in\operatorname{M-SEP}}\vert\langle\phi\vert\psi^{B_{1}\ldots B_{M}}\rangle\vert^2
        \end{equation}
        where $\operatorname{M-SEP}$ is the state of $M$-partite product states. 
    \end{definition}
    Recall, the following fact about $\max_{\phi\in\operatorname{M-SEP}}\vert\langle\phi\vert\psi\rangle\vert^2$ from~\cite{Streltsov_2010}:
    \begin{equation}\label{eqn:geometric_entanglement_1}
        \max_{\phi\in\operatorname{M-SEP}}\vert\langle\phi\vert\psi^{B_{1}\ldots B_{M}}\rangle\vert^2 = \max_{\vert\phi_i\rangle^{B_{i}}}\operatorname{Tr}\left[\left(\vert\phi_{1}\rangle\langle\phi_{1}\vert^{B_{1}}\otimes\cdots\vert\phi_{M-1}\rangle\langle\phi_{M-1}\vert^{B_{M-1}}\right)\psi^{B_{1}\ldots B_{M-1}}\right]
    \end{equation}
    \begin{definition}
        Multipartite Geometric Entanglement for a mixed state $\rho$ is defined as follows:
        \begin{equation}\label{eqn:geometric_entanglement}
            E_{g,M}(\rho) = \min \sum_{i}p_{i}E_{g}(\psi_{i})
        \end{equation}
        where minimization is taken over all pure state decompositions of $\rho$ such that $\sum_{i}p_{i}\psi_{i}$.
    \end{definition}
    
    Following the analysis from \cite{Ganardilocalpurity}, consider the following scenario: Alice $A$ teams up with the many Bobs $B_{1}\otimes\cdots\otimes B_{M-1}$ to cool his individual system to the ground state. The figure of merit is then given by
        \begin{equation}
            F^{A\vert B_{1}\cdots B_{M-1}}(\rho^{AB_{1}\cdots B_{M-1}}) = \max_{\Lambda\in\operatorname{GLOCC}}\langle 0\vert^{B_{1}}\cdots \langle 0\vert^{B_{M-1}}\operatorname{Tr}_{A}\left[\Lambda(\rho^{AB_{1}\cdots B_{M}})\right] \vert 0\rangle^{B_{M-1}}\cdots\vert 0\rangle^{B_{1}} .
        \end{equation}
        where GLOCC is the set of Gibbs preserving LOCC channels. We can rewrite the above equation in the following fashion:
    \begin{equation}
        F^{A\vert B_{1}\cdots B_{M-1}}(\rho^{AB_{1}\cdots B_{M-1}}) = \max_{\Lambda\in\operatorname{GLOCC}}\operatorname{Tr}\left[(\openone^{A}\otimes\vert 0\rangle\langle 0\vert^{B_{1}}\otimes\cdots\otimes\vert 0\rangle\langle 0\vert^{B_{M-1}})\,\Lambda(\rho^{AB_{1}\cdots B_{M-1}})\right].
    \end{equation}
    Let us further restrict the set of allowed operation such that classical communication flows exclusively from Alice to Bob. When Alice executes a local measurement characterized by POVM elements $\{M^{A}_{i}\}$, the probability of obtaining the measurement outcome $i$ manifests as: $p_{i} = \operatorname{Tr}[(M^{A}_{i}\otimes\openone^{B_{1}\cdots B_{M}})\rho^{AB_{1}\cdots B_{M}}]$. Given Alice's measurement outcome, the resultant state of Bobs' systems is $\sigma^{B_{1}\cdots B_{M}}_{i}=  \operatorname{Tr}_{A}[(M^{A}_{i}\otimes\openone^{B_{1}\cdots B_{M}})\rho^{AB_{1}\cdots B_{M}}]/p_{i}$ The maximal fidelity Bob can achieve with the ground state in this setup is represented as:
    \begin{align}
        F^{A\vert B_{1}\cdots B_{M-1}}_{\rightarrow}&(\rho^{AB_{1}\cdots B_{M-1}})\notag\\
        &= \max_{\Lambda\in\operatorname{GLOCC}}\sum_{i}\operatorname{Tr}\left[\Lambda^\dagger(\openone^{A}\otimes\vert 0\rangle\langle 0\vert^{B_{1}}\otimes\cdots\otimes\vert 0\rangle\langle 0\vert^{B_{M-1}})(\rho^{AB_{1}\cdots B_{M-1}})\right]\nonumber\\
        &= \max_{\{M^{A}_{i}\},\{\mu_{i,k}^{B_{k}}\}}\sum_{i}\operatorname{Tr}\left[\left(M^{A}_{i}\otimes\mu_{i,1}^{B_{1}}\otimes\cdots\otimes\mu_{i,M-1}^{B_{M-1}}\right)(\rho^{AB_{1}\cdots B_{M-1}})\right]\label{eqn:basis_4_sdp}\\
        &= \max_{\{M^{A}_{i}\}}\sum_{i}p_{i}\max_{\{\mu_{i,k}^{B_{k}}\}}\operatorname{Tr}\left[(\mu_{i,1}^{B_{1}}\otimes\cdots\otimes\mu_{i,M-1}^{B_{M-1}})(\sigma^{B_{1}\cdots B_{M-1}})\right]\nonumber
    \end{align}
    where the maximization is over all POVMs $\{M^{A}_{i}\}$ on $A$ and  $\{\mu_{i,k}^{B_{k}}\}$ are pure states $B_{k}$ for all $i$ and $k$.
    
    Using \eqref{eqn:basis_4_sdp}, we can see that $F^{A\vert B_{1}\cdots B_{M-1}}_{\rightarrow}(\rho^{AB_{1}\cdots B_{M-1}})$ is upper bounded by the following semi-definite program:
    \begin{equation}
        F^{A\vert B_{1}\cdots B_{M-1}}_{\rightarrow}(\rho^{AB_{1}\cdots B_{M-1}})\leq \max_{X^{AB_{1}\cdots B_{M-1}}}\operatorname{Tr}\left[X^{AB_{1}\cdots B_{M-1}}\rho^{A B_{1}\cdots B_{M-1}}\right]
    \end{equation}
    where the optimization is over all matrices $X^{AB_{1}\cdots B_{M-1}}$ such that 
    \begin{equation}
        X^{AB_{1}\cdots B_{M-1}}\geq 0,\quad X^{A} = \openone^{A},\quad X^{T_{\tilde{B}}}\geq 0\,\, \forall \tilde{B}
    \end{equation}
    where $\tilde{B}$ is any bipartition of $B_{1}\cdots B_{M-1}$.\\
    
    Now, we can state and prove the following theorem.
    \begin{thm}
        For any multipartite pure state $\psi^{AB_{1}\cdots B_{M}}$, the following equation holds:
        \begin{equation}
            F^{A\vert B_{1}\cdots B_{M-1}}_{\rightarrow}(\rho^{AB_{1}\cdots B_{M-1}}) + E_{g,M}(\rho^{B_{1}\cdots B_{M}}) = 1
        \end{equation}
    \end{thm}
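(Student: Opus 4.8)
The plan is to prove the equality directly by building a correspondence between Alice's cooling strategies applied to the marginal $\rho^{AB_1\cdots B_{M-1}}=\operatorname{Tr}_{B_M}\ketbra{\psi}$ and the pure-state decompositions of the marginal $\rho^{B_1\cdots B_M}=\operatorname{Tr}_A\ketbra{\psi}$ that enter the convex-roof definition of $E_{g,M}$. The starting point is the expression for the one-way cooling fidelity obtained above, which after performing the inner optimisations reads
\begin{equation}
    F^{A\vert B_1\cdots B_{M-1}}_{\rightarrow}(\rho^{AB_1\cdots B_{M-1}}) = \max_{\{M^A_i\}}\;\sum_i p_i\;\max_{\{\mu^{B_k}_{i,k}\}}\operatorname{Tr}\!\left[\bigl(\mu^{B_1}_{i,1}\otimes\cdots\otimes\mu^{B_{M-1}}_{i,M-1}\bigr)\sigma^{B_1\cdots B_{M-1}}_i\right],
\end{equation}
with $p_i=\operatorname{Tr}[(M^A_i\otimes\openone)\rho^{AB_1\cdots B_{M-1}}]$ and $\sigma^{B_1\cdots B_{M-1}}_i$ the state of the Bobs conditioned on Alice's outcome $i$. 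A standard refinement argument shows the outer maximum is attained on rank-one POVMs on $A$: splitting a POVM element into rank-one pieces only gives Alice more information, hence cannot decrease the cooling fidelity. Since $\psi^{AB_1\cdots B_M}$ is pure, a rank-one outcome on $A$ leaves $B_1\cdots B_M$ in a pure state $\ket{\psi_i}$, so that $\sigma^{B_1\cdots B_{M-1}}_i=\operatorname{Tr}_{B_M}\ketbra{\psi_i}$ and $\rho^{B_1\cdots B_M}=\sum_i p_i\ketbra{\psi_i}$.

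The next step is to recognise the inner maximum as a geometric-entanglement overlap. Applying the identity \eqref{eqn:geometric_entanglement_1} with $\psi$ replaced by $\psi_i$ gives
\begin{equation}
    \max_{\{\mu^{B_k}_{i,k}\}}\operatorname{Tr}\!\left[\bigl(\mu^{B_1}_{i,1}\otimes\cdots\otimes\mu^{B_{M-1}}_{i,M-1}\bigr)\sigma^{B_1\cdots B_{M-1}}_i\right]=\max_{\phi\in\operatorname{M-SEP}}\vert\braket{\phi|\psi_i}\vert^2=1-E_{g,M}(\psi_i).
\end{equation}
Substituting this back and using $\sum_i p_i=1$ yields $F^{A\vert B_1\cdots B_{M-1}}_{\rightarrow}(\rho^{AB_1\cdots B_{M-1}}) = 1-\min_{\{M^A_i\}}\sum_i p_i\,E_{g,M}(\psi_i)$, so the theorem reduces to the claim that minimising $\sum_i p_i E_{g,M}(\psi_i)$ over rank-one POVMs on $A$ equals $E_{g,M}(\rho^{B_1\cdots B_M})$.

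To close this I would invoke the Hughston--Jozsa--Wootters (Schr\"odinger mixture) theorem. On one side, every rank-one POVM on $A$ produces a pure-state decomposition of $\rho^{B_1\cdots B_M}$, so the minimum over measurements is at least $E_{g,M}(\rho^{B_1\cdots B_M})$. On the other side, because $\psi^{AB_1\cdots B_M}$ is a purification of $\rho^{B_1\cdots B_M}$ the system $A$ has dimension at least $\operatorname{rank}\rho^{B_1\cdots B_M}$, and allowing POVMs with an arbitrary finite number of outcomes (a bound of $(\operatorname{rank}\rho^{B_1\cdots B_M})^2$ suffices, by Carath\'eodory's theorem) every pure-state decomposition of $\rho^{B_1\cdots B_M}$ is realised by some measurement on $A$, which gives the reverse inequality. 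Combining the two bounds proves the equality, and hence the theorem. I expect the HJW correspondence to be the step needing the most care: one has to verify that the given purifying system $A$ is large enough to generate arbitrary decompositions, and that the passage to rank-one POVMs is free for both the cooling fidelity and the convex roof — the latter because refining a mixed-state decomposition into pure states can only lower the averaged entanglement.
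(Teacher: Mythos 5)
Your proposal is correct and follows essentially the same route as the paper's own proof: reduce to rank-one POVMs on the purifying system $A$, use the identity \eqref{eqn:geometric_entanglement_1} to identify the inner product-state optimisation on $B_1\cdots B_{M-1}$ with $1-E_{g,M}$ of the pure conditional state on $B_1\cdots B_M$, and then invoke the correspondence between measurements on $A$ and pure-state decompositions of $\rho^{B_1\cdots B_M}$ (one direction via the definition of the convex roof, the other via Hughston--Jozsa--Wootters) to match the minimum over POVMs with the convex-roof value. Your explicit attention to the POVM outcome count and the Carath\'eodory bound is a slightly more careful treatment of a point the paper leaves implicit, but it is not a different argument.
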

    \begin{proof}
        Let $\{M_{i}^{A}\}$ be the POVM applied by Alice to achieve the maximum in \eqref{eqn:basis_4_sdp}. Then, the probability $p_{i}$ of obtaining the measurement outcome $i$ and the post measurement state for the Bobs are given as follows:
        \begin{align}
            p_{i} &= \operatorname{Tr}\left[\left(M^{A}_{i}\otimes\openone^{B_{1}}\otimes\cdots\otimes\openone^{B_{M-1}}\right)(\rho^{AB_{1}\cdots B_{M-1}})\right]\\\nonumber
            \sigma_{i}^{B_{1}\cdots B_{M-1}} &= \frac{1}{p_{i}}\operatorname{Tr}_{A}\left[\left(M^{A}_{i}\otimes\openone^{B_{1}}\otimes\cdots\otimes\openone^{B_{M-1}}\right)(\rho^{AB_{1}\cdots B_{M-1}})\right]
        \end{align}
        We then get that, 
        \begin{align}\label{eqn:f_1w_simplified}
            F^{A\vert B_{1}\cdots B_{M-1}}_{\rightarrow}&(\rho^{AB_{1}\cdots B_{M-1}})= \max_{\{\mu_{i,k}^{B_{k}}\}}\sum_{i}\operatorname{Tr}\left[\left(M^{A}_{i}\otimes\mu_{i,1}^{B_{1}}\otimes\cdots\otimes\mu_{i,M-1}^{B_{M-1}}\right)(\rho^{AB_{1}\cdots B_{M-1}})\right].
        \end{align}
        Without loss of generality, we can assume that elements of the POVM $\{M_{i}^{A}\}$ are rank one.
        
        Now consider the action of $\{M_{i}^{A}\}$ on the pure state, $\psi^{AB_{1}\cdots B_{M}}$ which is the purification of $\rho^{AB_{1}\cdots B_{M-1}}$. Given that the measurement outcome obtained is $i$, the post measurement state is 
        \begin{equation}
            \sigma_{i}^{B_{1}\cdots B_{M}} = \frac{1}{p_{i}}\operatorname{Tr}_{A}\left[\left(M^{A}_{i}\otimes\openone^{B_{1}}\otimes\cdots\otimes\openone^{B_{M}}\right)(\psi^{AB_{1}\cdots B_{M}})\right]
        \end{equation}
        We can calculate the geometric entanglement of $\sigma_{i}^{B_{1}\cdots B_{M}}$ using \eqref{eqn:geometric_entanglement_1}, to get
        \begin{equation}
            E_{g,M}(\sigma_{i}^{B_{1}\cdots B_{M}}) = 1 - \frac{1}{p_{i}}\max_{\{\mu_{i,k}^{B_{k}}\}}\operatorname{Tr}\left[\left(M^{A}_{i}\otimes\mu_{i,1}^{B_{1}}\otimes\cdots\otimes\mu_{i,M-1}^{B_{M-1}}\right)(\psi^{AB_{1}\cdots B_{M-1}})\right]
        \end{equation}
        Together with \eqref{eqn:f_1w_simplified}, we get
        \begin{equation}
            F^{A\vert B_{1}\cdots B_{M-1}}_{\rightarrow}(\rho^{AB_{1}\cdots B_{M-1}}) = 1- \sum_{i}p_{i}E_{g,M}(\sigma_{i}^{B_{1}\cdots B_{M}})
        \end{equation}
        Using convexity of the geometric entanglement, we can further say that
        \begin{equation}
            F^{A\vert B_{1}\cdots B_{M-1}}_{\rightarrow}(\rho^{AB_{1}\cdots B_{M-1}}) \leq 1- E_{g,M}(\rho^{B_{1}\cdots B_{M}})
        \end{equation}
        Note that there always exists a pure state decomposition $\{q_{i},\phi_{i}^{B_{1}\cdots B_{M}}\}$ of $\rho^{B_{1}\cdots B_{M}}$ such that 
        \begin{equation}
            E_{g,M}(\rho^{B_{1}\cdots B_{M}}) = 1- \sum_{i}q_{i}E_{g,M}(\phi_{i}^{B_{1}\cdots B_{M}}) = 1 - \max_{\{\mu_{i,k}^{B_{k}}\}}\operatorname{Tr}\left[\left(\mu_{i,1}^{B_{1}}\otimes\cdots\otimes\mu_{i,M-1}^{B_{M-1}}\right)(\phi_{i}^{B_{1}\cdots B_{M-1}})\right]
        \end{equation}
        Also, there exists a POVM $\{M_{i}^{A}\}$ that can be applied on $\psi^{AB_{1}\cdots B_{M}}$, the purification  of $\rho^{B_{1}\cdots B_{M}}$, to obtain the ensemble $\{q_{i},\phi_{i}^{B_{1}\cdots B_{M}}\}$. Furthermore, for the $\rho^{AB_{1}\cdots B_{M-1}}$, the maximal fidelity to the ground state is shown to be   equal to $1- E_{g,M}(\rho^{B_{1}\cdots B_{M}})$ in~\eqref{eqn:basis_4_sdp}. This concludes the proof. 
    \end{proof} 

    Similar to the theorems in \cite{Ganardilocalpurity}, we can see that $E_{g,M}(\rho^{B_{1}\cdots B_{M}})$ is lower bounded by the following semi-definite program:
    \begin{equation}
        E_{g,M}(\rho^{B_{1}\cdots B_{M}})\geq 1- \max_{X^{AB_{1}\cdots B_{M}}}\operatorname{Tr}\left[X^{AB_{1}\cdots B_{M}}\psi^{A B_{1}\cdots B_{M}}\right]
    \end{equation}
    where $\psi^{A B_{1}\cdots B_{M}}$ is a purification of $\rho^{B_{1}\cdots B_{M}}$, and the optimization is over all matrices $X^{AB_{1}\cdots B_{M}}$ such that 
    \begin{equation}
        X^{AB_{1}\cdots B_{M}}\geq 0,\quad X^{A} = \openone^{A},\quad X^{T_{\tilde{B}}}\geq 0\,\, \forall \tilde{B}
    \end{equation}
    where $\tilde{B}$ is any bipartition of $B_{1}\cdots B_{M}$.

\section{How bounds look in multipartite scenario}{\label{se::multi}}

The crucial part of our method is performing the PPT relaxation. 
Assume we have a multipartite state \(\rho^{A_1A_2\ldots A_n}\). 
If the state is fully separable, it satisfies separability conditions across all possible partitions. 
We relax these conditions in the following way: we perform the PPT relaxation by imposing the PPT constraint in each bipartition \(A_i\) versus the remaining subsystems. 
It is also possible to tighten the relaxation by imposing the PPT criterion on all bipartitions; however, numerical computations show that this does not improve the accuracy while significantly increasing the computational cost. For example, consider the modified bound \eqref{eq::bopurc} for a tripartite state.\\

\textbf{Lower bound 3'}
    We can bound the geometric entanglement from below by the following SDP.
    \begin{equation}
        E_{g}(\rho^{B_1\ldots B_n})\geq1-\max_{X^{AB_1\ldots B_{n-1}}} \;  \, \mathrm{Tr} \left( X^{AB_1\ldots B_{n-1}}\rho^{AB_1\ldots B_{n-1}}\right),
    \end{equation}
    where $\rho^{AB_1\ldots B_{n}}$ is the purification of $\rho^{B_1\ldots B_n}.$
    Subject to the following constraints:
    \[
\begin{aligned}
& X^{AB_1\ldots B_{n-1}} \succeq 0 \quad  
 X^{A} = \openone^{A} \quad \\
 & X^{AB_1\ldots B_{n-1}} \succeq 0 \quad where\quad  i=A,B_1,\ldots ,B_{n-1}. 
\end{aligned} 
\]
Other bounds are adjusted in similar manner.
Note that in Appendix \ref{se::pr} we demanded $X$ to be PPT in all bipartitions. Here, to accelerate computations, we relax some of the  constraints and only demand $X$ to be PPT when we transpose each subsystem. By doing this, we are computing a lower bound to our proposed lower bound.

\section{K-symmetric extension in the multipartite scenario}{\label{se::k}}

In the bipartite case, a k-symmetric extension of a state $\rho^{AB_1}$ is defined by: $\rho^{AB_1\ldots B_k},$ with additional condition
\begin{equation}
    \rho^{AB_1}=\rho^{AB_k}.
    \label{eq::ksymrho}
\end{equation}
For multipartite state $\rho^{ABC},$ we define a k-symmetric extension by adding k additional copies of the last 2  subsystems. For example, for $k=3,$ we obtain the state $\rho^{AB_1C_1B_2C_2C_3},$ with the following constraints:
\begin{equation}
    \rho^{AB_1C_1}=\rho^{AB_1C_i},
    \label{eq::k3}
\end{equation}
where $i=2,3,$ and 
\begin{equation}
    \rho^{AB_1C_1}=\rho^{AB_2C_2}.
    \label{eq::k3addit}
\end{equation}
For greater k and higher partite states the k-symmetric extension is build similarly.

\section{Phase transition in XXX model}{\label{s::pha}}
The thermal state of Hamiltonian \eqref{eq::xxx} is given by: 
\begin{eqnarray}
    \rho_{m}=&\left(
\begin{array}{cccccccc}
 \frac{1}{\mu } & 0 & 0 & 0 & 0 & 0 & 0 & 0 \\
 0 & \frac{\xi  e^{2 \beta  h}}{3 \mu } & -\frac{\kappa  e^{2 \beta  h}}{3 \mu } & 0 & -\frac{\kappa  e^{2 \beta  h}}{3 \mu } & 0 & 0 & 0 \\
 0 & -\frac{\kappa  e^{2 \beta  h}}{3 \mu } & \frac{\xi  e^{2 \beta  h}}{3 \mu } & 0 & -\frac{\kappa  e^{2 \beta  h}}{3 \mu } & 0 & 0 & 0 \\
 0 & 0 & 0 & \frac{\xi  e^{4 \beta  h}}{3 \mu } & 0 & -\frac{\kappa  e^{4 \beta  h}}{3 \mu } & -\frac{\kappa  e^{4 \beta  h}}{3 \mu } & 0 \\
 0 & -\frac{\kappa  e^{2 \beta  h}}{3 \mu } & -\frac{\kappa  e^{2 \beta  h}}{3 \mu } & 0 & \frac{\xi  e^{2 \beta  h}}{3 \mu } & 0 & 0 & 0 \\
 0 & 0 & 0 & -\frac{\kappa  e^{4 \beta  h}}{3 \mu } & 0 & \frac{\xi  e^{4 \beta  h}}{3 \mu } & -\frac{\kappa  e^{4 \beta  h}}{3 \mu } & 0 \\
 0 & 0 & 0 & -\frac{\kappa  e^{4 \beta  h}}{3 \mu } & 0 & -\frac{\kappa  e^{4 \beta  h}}{3 \mu } & \frac{\xi  e^{4 \beta  h}}{3 \mu } & 0 \\
 0 & 0 & 0 & 0 & 0 & 0 & 0 & \frac{e^{6 \beta  h}}{\mu } \\
\end{array}
\right),
\label{magneticthermalstate}
\end{eqnarray}
where $\kappa=e^{3 \beta  J}-1,$ $\xi=2 e^{3 \beta  J}+1,$ $\mu=\left(e^{2 \beta  h}+1\right) \left(e^{4 \beta  h}+2 e^{2 \beta  h+3 \beta  J}+1\right).$ Consider the ground state of \eqref{magneticthermalstate}, i.e. state \eqref{magneticthermalstate} in the limit $\beta\rightarrow\infty.$ When we change magnetic field $h,$ we observe discontinuous change for $h=\frac{3}{2}J,$ $h=0$ and $h=-\frac{3}{2}J.$ This points corresponds to phase transition points we observe in the Fig. \ref{fig::mag} in the main part of the paper. For $h>\frac{3}{2}J,$ the ground state is $\ket{111}\bra{111}.$ It is a separable state. For $h=\frac{3}{2}J,$ the ground state is 
\begin{equation}
    \left(
\begin{array}{cccccccc}
 0 & 0 & 0 & 0 & 0 & 0 & 0 & 0 \\
 0 & 0 & 0 & 0 & 0 & 0 & 0 & 0 \\
 0 & 0 & 0 & 0 & 0 & 0 & 0 & 0 \\
 0 & 0 & 0 & \frac{2}{9} & 0 & -\frac{1}{9} & -\frac{1}{9} & 0 \\
 0 & 0 & 0 & 0 & 0 & 0 & 0 & 0 \\
 0 & 0 & 0 & -\frac{1}{9} & 0 & \frac{2}{9} & -\frac{1}{9} & 0 \\
 0 & 0 & 0 & -\frac{1}{9} & 0 & -\frac{1}{9} & \frac{2}{9} & 0 \\
 0 & 0 & 0 & 0 & 0 & 0 & 0 & \frac{1}{3} \\
\end{array}
\right).
\label{m32}
\end{equation}
The geometric entanglement of the above state is $0.115599055$ with accuracy $3\times10^{-7}.$ 
The ground state for $0<h<\frac{3}{2}$ is:
\begin{equation}
    \left(
\begin{array}{cccccccc}
 0 & 0 & 0 & 0 & 0 & 0 & 0 & 0 \\
 0 & 0 & 0 & 0 & 0 & 0 & 0 & 0 \\
 0 & 0 & 0 & 0 & 0 & 0 & 0 & 0 \\
 0 & 0 & 0 & \frac{1}{3} & 0 & -\frac{1}{6} & -\frac{1}{6} & 0 \\
 0 & 0 & 0 & 0 & 0 & 0 & 0 & 0 \\
 0 & 0 & 0 & -\frac{1}{6} & 0 & \frac{1}{3} & -\frac{1}{6} & 0 \\
 0 & 0 & 0 & -\frac{1}{6} & 0 & -\frac{1}{6} & \frac{1}{3} & 0 \\
 0 & 0 & 0 & 0 & 0 & 0 & 0 & 0 \\
\end{array}
\right).
\label{eq::hr1}
\end{equation}
When we compute the lower bound \eqref{bo::pptf} we find that $\sigma$ which maximizes it is given by 
\begin{equation}
    \sigma*=\frac{1}{3}\left(\ket{011}\bra{011}+\ket{101}\bra{101}+\ket{110}\bra{110}\right).
    \label{eq::analyt}
\end{equation}
By using $E_g(\rho)\leq 1-F(\rho,\sigma),$ where $\sigma$ is a separable state we show that the geometric entanglement is at most $\frac{1}{3}.$ Lower bound computed by our method is $10^{-9}$ smaller. We thus obtained almost exact value of the geometric entanglement in that case. 
And the geometric entanglement for it is $0.333.$ 
For $h=0$ the ground state is:
\begin{equation}
    \left(
\begin{array}{cccccccc}
 0 & 0 & 0 & 0 & 0 & 0 & 0 & 0 \\
 0 & \frac{1}{6} & -\frac{1}{12} & 0 & -\frac{1}{12} & 0 & 0 & 0 \\
 0 & -\frac{1}{12} & \frac{1}{6} & 0 & -\frac{1}{12} & 0 & 0 & 0 \\
 0 & 0 & 0 & \frac{1}{6} & 0 & -\frac{1}{12} & -\frac{1}{12} & 0 \\
 0 & -\frac{1}{12} & -\frac{1}{12} & 0 & \frac{1}{6} & 0 & 0 & 0 \\
 0 & 0 & 0 & -\frac{1}{12} & 0 & \frac{1}{6} & -\frac{1}{12} & 0 \\
 0 & 0 & 0 & -\frac{1}{12} & 0 & -\frac{1}{12} & \frac{1}{6} & 0 \\
 0 & 0 & 0 & 0 & 0 & 0 & 0 & 0 \\
\end{array}
\right).
\label{hr0}
\end{equation}
The geometric entanglement of this state is $0.25.$
\section{What if the lower bound is higher than the upper bound?}{\label{s::accura}}

\texttt{entcalc} may occasionally output a lower bound that is higher than the upper bound. 
This can happen due to one or more of the following reasons:
\begin{enumerate}
    \item The difference between the lower and upper bound is smaller than the solver precision. 
    In this case, the lower bound lies within an interval centered at the SDP result with a length equal to twice the solver's precision. 
    Increasing the solver precision should eliminate this issue.

    \item The solver stopped before reaching the optimal value. 
    For example, if the maximum number of iterations is too small, the solver may terminate before convergence. 
    SDP solvers typically start from a point that does not satisfy the constraints and gradually converge toward the optimum, improving feasibility and precision at each step. 
    If execution stops prematurely, the result can deviate significantly from the true optimum. 
    In such cases, \texttt{entcalc} reports that the optimum has not been reached.

    \item For some problems, the solver may stop before reaching the true optimum without issuing any warning. 
    This can occur when stopping criteria are met — for instance, when the primal-dual gap, primal feasibility, dual feasibility, or step size drops below predefined thresholds. 
    For numerically unstable states, such as thermal states near quantum phase transitions, it may happen that the stopping criteria are satisfied even though the result is still imprecise. 
    The solver may then not indicate that the required precision was not achieved. 
    Based on our computations, the precision in such cases is typically around \(10^{-7}\). 
    If higher precision is required, one should compare lower bounds computed using both MOSEK and SDPT3 solvers and take the smaller value. 
    It is very rare that both solvers produce imprecise results simultaneously.

    \item The machine precision is typically \(10^{-16}\). 
    When computing the upper bound, a square root operation is applied, effectively reducing the attainable precision to around \(10^{-8}\).
\end{enumerate}

Taking all these factors into account, in the worst-case scenario \texttt{entcalc} should yield results with precision around \(10^{-7}\). 
In most cases, the precision is significantly better. 
However, if the lower bound occasionally exceeds the upper bound by about \(10^{-7}\), this discrepancy is due to numerical precision limitations.
If computations with precision higher than $10^{-7}$ are needed one should use MOSEK API directly.
\end{document}